\theoremstyle{definition}
\newtheorem{Definition}{Definition}[section]
\theoremstyle{plain}
\newtheorem{Theorem}[Definition]{Theorem}
\newtheorem{Proposition}[Definition]{Proposition}
\newtheorem{Lemma}[Definition]{Lemma}
\newtheorem{Corollary}[Definition]{Corollary}
\theoremstyle{remark}
\numberwithin{equation}{section}
\newcounter{remcount}
\def\cD{{\cal D}}
\def\cH{{\cal H}}
\def\cK{{\cal K}}
\def\cM{{\cal M}}
\def\cS{{\cal S}}
\def\cW{{\cal W}}
\def\bC{{\mathbb C}}
\def\bN{{\mathbb N}}
\def\bR{{\mathbb R}}
\def\l{\lambda}
\def\supp{{\text{supp}\,}}
\newcommand{\id}{\mathbbm{1}}
\renewcommand{\Im}{{\mathrm{Im}\,}}
\newcommand{\sgn}{{\mathrm{sgn}}}
\newcommand{\bdx}{{\boldsymbol{x}}}
\newcommand{\bdy}{{\boldsymbol{y}}}
\newcommand{\bdz}{{\boldsymbol{z}}}
\newcommand{\bdp}{{\boldsymbol{p}}}
\newcommand{\bdq}{{\boldsymbol{q}}}
\newcommand{\bds}{{\boldsymbol{\sigma}}}
\newcounter{propcount}
\newlength{\maxlabelwidth}
\newcommand{\inst}[1]{$^\textrm{#1}$ }
\begin{document}

\title{The fermionic massless modular Hamiltonian}
\author{Francesca La Piana\inst{1}, Gerardo Morsella\inst{2}}
\date{
\parbox[t]{0.9\textwidth}{\footnotesize{%
\begin{itemize}
\item[1] Department of Mathematics, University of Oslo, P.O.\ Box 1053, 0316 Blindern, Oslo (Norway), e-mail: franla@math.uio.no
\item[2] Dipartimento di Matematica, Universit\`a di Roma Tor Vergata, via della Ricerca Scientifica, 1 I-00133 Roma (Italy), e-mail: morsella@mat.uniroma2.it
\end{itemize}
}}
\\
\vspace{\baselineskip}
November 29, 2024}

\maketitle


\begin{abstract}
We provide an explicit expression for the modular hamiltonian of the von Neumann algebras associated to the unit double cone for the (fermionic) quantum field theories of the 2-component Weyl (helicity 1/2) field, and of the 4-component massless Dirac and Majorana fields. To this end, we represent the one particle spaces of these theories in terms of solutions of the corresponding wave equations, and obtain the action of the modular group on them. As an application, we compute the relative entropy between the vacuum of the massless Majorana field and one particle states associated to waves with Cauchy data localized in the spatial unit ball.
\end{abstract}

\section{Introduction}
A milestone of the theory of operator algebras, the Tomita-Takesaki modular theory associates a canonical one parameter group of automorphisms $\{ \sigma^\varphi_t\}_{t \in \bR}$, the \emph{modular group}, to every von Neumann algebra $\mathscr{M}$ equipped with a faithful normal state $\varphi$~\cite{Ta}. The \emph{modular Hamiltonian} is then the self-adjoint generator $\log \Delta_\varphi$ of the unitary group implementing the modular group in the GNS representation $\pi_\varphi$ of $\varphi$:
\[
 \Delta_\varphi^{it} \pi_\varphi(A) \Delta_\varphi^{-it} = \pi_\varphi(\sigma^\varphi_t(A)), \qquad A \in \mathscr{M},\,t \in \bR.
\]

In the algebraic approach to quantum field theory~\cite{Ha}, for every (open) region $O$ of Minkowski space time $\mathbb{R}^4$, we consider the von Neumann algebra $\mathscr{A}(O)$ generated by all the observables that are measurable in that region, or, more generally, the (larger) von Neumann algebra $\mathscr{F}(O)$ generated by all the (charged) fields localized in $O$. Under standard assumptions on the net $O \mapsto \mathscr{F}(O)$, (resp. $O \mapsto \mathscr{A}(O)$) the Reeh-Schlieder theorem states that, if both $O$ and its causal complement $O'$ have non empty interiors, the vacuum vector $\Omega$ is cyclic and separating for each algebra $\mathscr{F}(O)$ (resp. $\mathscr{A}(O)$), so the restriction to this algebra of the vacuum state $\omega= \langle \Omega, (\cdot)\Omega\rangle$ is normal and faithful, and we can consider the associated modular Hamiltonian $\log \Delta_O$.

 It is then natural to try to give an explicit description of the modular Hamiltonian for a certain region of Minkowski space in a given field theory. Apart from its intrinsic interest, which dates back to the very discovery of modular theory at the end of the 60s, this problem has attracted renewed attention due to the connection between the modular Hamiltonian and Araki's relative entropy~\cite{Ar}, and to the interest in the computation of information theoretic quantities in quantum field theory, which has been rapidly developing in the last decades (see, e.g., \cite{ABCH, CGP, CH, CF, CLR, Lo, Wi} and references therein). 
 
We have some geometric examples of determination of the modular Hamiltonian. Bisognano and Wichmann~\cite{BW1, BW2} described the modular operators for a wedge region in the representation of the vacuum in a model independent setting, starting from the observation that a certain Lorentz boost leaves the wedge invariant; the modular group is indeed induced by this boost. Then Buchholz~\cite{Bu} showed that the modular group of the future light cone, for massless free field theory in odd spatial dimensions $>1$, is induced by the group of global dilatations. Building on these results, and exploiting the geometric relationship between the unit double cone centered at the origin $O_1$ and the wedge, provided by the conformal symmetry of the theory, Hislop and Longo~\cite{HL} obtained the modular group associated with $O_1$ in the case of the massless free scalar field, in terms of a one parameter group of conformal symmetries preserving the double cone. These results have been generalised to the case of massless free fields with non-zero helicity in~\cite{Hi}.

Analogously to what has been done in the scalar field case in~\cite{LM}, in the present work we employ the results of~\cite{Hi} to compute the one particle modular Hamiltonian of the field algebra associated to the unit double cone  for the massless Weyl (i.e., helicity $\frac12$), Dirac and Majorana fields. For any of these theories, the modular Hamiltonian on the corresponding fermionic Fock space is then the second quantization of the one particle one. The action of the modular Hamiltonian on the (smooth, compactly supported) Cauchy data $\Psi_0 : \bR^3 \to \bC^4$ of the Dirac equation is given by
\[
    i \log \Delta_{O_1}\Psi_0(\bdx) = -\pi \big[(1-r^2) \partial_k - x_k\big] \gamma^0\gamma^k \Psi_0(\bdx),
\]
where $\gamma^0$ and $\gamma^k$ ($k=1,2,3$) are the Minkowski gamma matrices (see~\eqref{eq:gammamat}) and $r= \left| \bdx \right|$, $\bdx \in \bR^3$.

Our motivation for working with the Cauchy data of solutions of the Weyl or Dirac equations is that we hope that this setting can be of some use in the much more difficult open problem of computing the double cone modular Hamiltonian for the corresponding massive theories. This is suggested by the fact that, due to the canonical anticommutation relations, the C*-algebra (and possibly the von Neumann algebra)  generated by the Cauchy data of the Dirac field is mass independent. It is also interesting to remark that, conversely, in the scalar case detailed knowledge of $\log \Delta_{O_1}$ seems fundamental in order to obtain a direct proof of the mass independence of the algebra $\mathscr{A}(O_1)$~\cite{CM}. 

As an application of the above result, we compute, in the massless Majorana field case, the relative entropy between the restrictions to $O_1$ of the vacuum and of the one particle state $\omega_\Psi$ induced by a Majorana solution $\Psi$ of the Dirac equation with Cauchy data supported in the unit ball, obtaining
\[
S(\omega_\Psi \| \omega) =\frac1{8\pi^2} \int_{\bR^3} d\bdx\,(1-r^2) \langle \Psi, T_{00}(0,\bdx) \Psi\rangle,
\]
with
\[
T_{00}(x) =  \frac i 2: \psi^\dagger(x) \partial_0 \psi(x) - \partial_0 \psi^\dagger(x) \psi (x) : 
\]
the quantum energy density of the Majorana field $\psi$ (see the beginning of Sec.~\ref{sec:spacetime} for our notations on spinorial Wightman fields). 

 The above formula for the relative entropy is to be compared with the analogous one obtained in~\cite{LM}, where the canonical energy density of the massless Klein-Gordon field appears together with an extra term proportional to the field's square, and suggests that the traceless stress-energy tensor should appear there too (see also~\cite{Tu}). This is also coherent with the expression of $\log \Delta_{O_1}$ in terms of the generator of special conformal transformations and with the classical Noether theorem. A general discussion of relative entropy for CAR algebras has recently appeared in~\cite{GMV}.

The rest of the paper in organized as follows. In Sec.~\ref{sec:spacetime}, for the paper to be reasonably self-contained, we briefly review the main results of~\cite{Hi}, and we compute the action of the modular group of $O_1$ on certain one particle vectors of the form $\phi(f)\Omega$, $\phi$ the Weyl field. In Sec.~\ref{sec:cauchy}, we provide the description of the one particle Hilbert space of the Weyl field in terms of (Cauchy data of) solutions of the Weyl equation, called waves, using the correspondence between the test functions and the solutions obtained by convolution with the causal propagator; then we express the action of the modular group on the wave space, and compute the modular Hamiltonian. Finally, in Sec.~\ref{sec:Dirac},  we use the decomposition of the 4-component Dirac and Majorana fields in terms of 2-components Weyl fields to obtain the massless Dirac and Majorana fields' local modular Hamiltonians, which is then in turn used to compute the relative entropy $S(\omega_\Psi \| \omega)$. The Appendices~\ref{app:A} and \ref{app:B} contain some technical facts used in the main text.

Part of this work is the subject of the first author's Master thesis in Mathematics, University of Roma Tor Vergata (2022), done under the supervision of the second author.

\section{Weyl field modular group in the spacetime formulation}\label{sec:spacetime}
In this section we compute an explicit formula for the action of the modular group on the one particle vectors of the free helicity $1/2$ field in the spacetime formulation, which is not provided in~\cite{Hi} and which we will need later on to obtain the modular hamiltonian in the Cauchy data formulation.

We start by introducing some notation. We will only use lower indices to denote the components of 3-vectors $\bdx = (x_1,x_2,x_3) \in \bR^3$ and of 4-vectors $x = (x_0, \bdx) \in \bR^4$. Accordingly, we will set $\partial_\mu := \frac{\partial}{\partial x_\mu}$, $\mu=0,\dots, 3$. Unless otherwise stated, we employ the summation convention over repeated indices. As usual, $\bdp \cdot \bdx = p_j x_j$ and $px = p_0 x_0 - \bdp\cdot \bdx$ are the euclidean and Minkowski scalar product of the 3-vectors $\bdp, \bdx \in \bR^3$ and of the 4-vectors $p=(p_0,\bdp), x = (x_0, \bdx) \in \bR^4$ respectively. We indicate by $\cS(O, \bC^s)$ and $C^\infty_c(O,\bC^s)$, $s \in \bN$, respectively the spaces of Schwartz and smooth, compactly supported functions with values in $\bC^s$ and support contained in the open set $O \subset \bR^d$ ($d=3,4$). In particular, if $O$ is bounded, then $\cS(O, \bC^s)$ and $C^\infty_c(O,\bC^s)$ coincide. We use the following conventions for the Fourier transforms of Schwartz functions on $\bR^4$ and $\bR^3$ respectively:
\[
\hat f(p) := \int_{\bR^4} dx\,f(x) e^{i px}, \quad p \in \bR^4, \qquad \hat g(\bdp) := \int_{\bR^3} d\bdx \,g(\bdx) e^{-i \bdp \cdot \bdx}, \quad \bdp \in \bR^3.
\] 
The Fourier transform of vector valued functions is defined componentwise.

Elements $\Psi \in \bC^s$, called spinors, will be thought of as column matrices, and their indices will be denoted by greek letters. The standard norm on $\bC^s$ is denoted by $|\cdot|$. Matrix notation will be used whenever it does not cause confusion. In particular, $A^t$, $A^\dagger$ denote respectively the transpose and the hermitian conjugate of the matrix $A$. Similar notations will be used for spinorial Wightman fields (operator valued distributions): i.e., given a multiplet of Wightman fields $\psi_\alpha$, $\alpha = 1,\dots, s$, we will define $\psi(f) := \psi_1(f_1) + \dots+ \psi_s(f_s)$, $f \in \cS(\bR^4,\bC^s)$, and we will denote by $\psi(x)$ the column matrix of operator valued distributions with entries $\psi_\alpha(x)$, $\alpha = 1,\dots,s$, so that the formal equation
\[
\psi(f) = \int_{\bR^4} dx f(x)^t \psi(x)
\]
holds. As a consequence, $\psi^\dagger(x)$ will be the row matrix with entries $\psi_\alpha^*(x)$, $\alpha = 1,\dots,s$, the Wightman fields defined as usual by $\psi_\alpha^*(f) := \psi_\alpha(\bar f)^*$, $f \in \cS(\bR^4; \bC)$. So that, e.g., given an $s \times s$ complex matrix $A$, we will have
\[
\psi^\dagger(x) A \psi(y) = \psi^*_\alpha(x) A_{\alpha\beta}\psi_\beta(y)
\] 
as an operator valued distribution on $\bR^4\times \bR^4$.

As customary, we denote by $\sigma_0 = \id$ the $2 \times 2$ identity matrix, and by
\[
\sigma_1 = \left[\begin{matrix}0 &1 \\ 1 & 0\end{matrix}\right], \quad \sigma_2 = \left[\begin{matrix}0 &-i \\ i & 0\end{matrix}\right], \quad \sigma_3 = \left[\begin{matrix}1 &0 \\ 0 & -1\end{matrix}\right],
\]
the three Pauli matrices. To a 4-vector $x = (x_0, \bdx) \in \bR^4$ we associate the hermitian matrices
\[
\undertilde{x} := x_0 + \bdx \cdot \bds = x_0 \id + x_j \sigma_j, \qquad \tilde{x} := x_0 - \bdx \cdot \bds = x_0 \id - x_j \sigma_j ,
\]
for which $\det \undertilde x = \det \tilde x = x^2$ (Minkowski square). 

For the convenience of the reader, we now briefly summarize the main results of~\cite{Hi} (to which we refer for proofs and further details) which we will need in the following.

The free, helicity $1/2$, right-handed field on 4-dimensional Minkowski spacetime is given by a pair of (bounded) Wightman fields $\phi_\alpha$, $\alpha = 1,2$, acting on the fermionic Fock space $\Gamma_-(\cH)$ on the one particle Hilbert space $\cH = L^2(\bR^3)\oplus L^2(\bR^3)$. These fields satisfy in the distributional sense the right-handed Weyl equation
\[
\undertilde{\partial} \phi(x) = (\partial_0 + \sigma_j \partial_j)\phi(x)=0, 
\]
and canonical anticommutation relations (CAR)
\begin{equation}\label{eq:CAR}
\{ \phi_\alpha(x), \phi^*_\beta(y)\} = S_{\alpha\beta}(x-y), 
\end{equation}
where $\{\cdot,\cdot\}$ denotes the anticommutator, and $S := \tilde{\partial}D$, with $D$ the commutator (Pauli-Jordan) distribution of the Klein-Gordon field
\[
D(x) = \frac{i}{(2\pi)^3}\int_{\bR^4} dp\,\varepsilon(p_0)\delta(p^2)e^{-ipx}.
\]



The associated twisted local net of von Neumann algebras is defined by
\[
\mathscr{F}(O) := \{ \phi(f)+\phi(f)^*\,:\, f \in \cS(O, \bC^2)\}''
\]
with $O \subset \bR^4$ open, and $\phi(f) := \phi_1(f_1)+\phi_2(f_2)$. There is also on $\Gamma_-(\cH)$ a unitary, strongly continuous representation $U$ of the group $SU(2,2)$ of all complex $4 \times 4$ matrices $g$ such that
\[
g B g^\dagger = B, \quad \det g = 1, \qquad \text{with }B = \left[\begin{matrix} 0 &-i \\ i &0\end{matrix}\right] \text{ ($2 \times 2$ blocks)},
\]
which is a four-fold cover of the 4-dimensional conformal group.
The group $SU(2,2)$ acts on the tube $T := \bR^4+i V_+$ by $(g,z) \in SU(2,2) \times T \mapsto gz \in T$, where
\begin{equation}\label{eq:SU22action}
\undertilde{gz} = (a\undertilde z +b)(c \undertilde z +d)^{-1}, \quad \text{if }g = \left[\begin{matrix} a &b \\ c&d\end{matrix}\right].
\end{equation}
Moreover, the universal cover $SL(2,\bC) \ltimes \bR^4$ of the Poincar\'e group is identified with the subgroup of $SU(2,2)$ of the matrices
\[
p(a,y) := \left[\begin{matrix} a &\undertilde{y} (a^\dagger)^{-1} \\ 0 &(a^\dagger)^{-1} \end{matrix}\right], \quad (a,y) \in SL(2,\bC) \ltimes \bR^4,
\]
Then, the restriction to this subgroup of (the limit as $\Im z \to 0$ of) the action~\eqref{eq:SU22action} is the usual action of $SL(2,\bC)\ltimes \bR^4$ on Minkowski space, and the net $O \mapsto \mathscr{F}(O)$ is covariant with respect to the restriction of the representation $U$ to $SL(2,\bC)\ltimes \bR^4$. Furthermore, the representation of the translations $y \in \bR^4 \mapsto U(p(1,y))$ has spectrum contained in $\bar V_+$ (spectrum condition).

The action of $SU(2,2)$ is most easily described thanks to the existence of strongly analytic (vector valued) functions $z \in T \mapsto \phi_\alpha(z)\Omega \in \cH$, $z \in T \mapsto \phi^*_\alpha(z)\Omega \in \cH$, $\alpha = 1,2$, whose boundary value for $\Im z \to 0$ are the vector valued distributions $\phi_\alpha(x)\Omega$, $\phi_\alpha^*(x)\Omega$~\cite[Thm.\ 3.4]{Hi}, i.e., such that, for all $f \in \cS(\bR^4,\bC^2)$,
\begin{equation}\label{eq:limphiz}
\phi(f)\Omega = \lim_{\substack{y \to 0\\ y \in V_+}} \int_{\bR^4} dx f(x)^t \phi(x+iy)\Omega, \qquad \phi(f)^*\Omega = \lim_{\substack{y \to 0\\ y \in V_+}} \int_{\bR^4} dx f(x)^\dagger \phi^*(x+iy)\Omega.
\end{equation}
Indeed, the representation $U$ acts on these functions as~\cite[Thm.\ 3.10, Rem.\ 3.11]{Hi}
\begin{equation}\label{eq:SU22phi}
\begin{aligned}
U(g)\phi(z)\Omega &= [\det( \undertilde{z} c^\dagger+d^\dagger)]^{-2}(\undertilde{z} c^\dagger + d^\dagger) \phi(gz)\Omega, \\
U(g)\phi^*(z)\Omega &= [\det( c\undertilde{z} +d)]^{-2}(c\undertilde{z}  + d)^t \phi^*(gz)\Omega,
\end{aligned}\qquad z \in T, \,g \in SU(2,2).
\end{equation}
Moreover, thanks to the spectrum condition, the translation representation can be extended to the tube $T$, obtaining a strongly continuous semi-group of bounded operators $U(p(1,w))$, $w \in T$, such that $U(p(1,w))\phi(z)\Omega = \phi(z+w)\Omega$ for all $z,w \in T$.

The modular group of the von Neumann algebra associated to the unit double cone $O_1 = \{ x \in \bR^4\,:\,|x_0|+|\bdx| < 1\}$, with the vacuum $\Omega$ as standard vector, is then expressed in terms of the $SU(2,2)$ representation $U$ by  \cite[Thm.\ 4.8]{Hi}
\begin{equation}\label{eq:modular}
\Delta_{O_1}^{i\lambda} = U(e(2\pi\lambda)), \qquad e(\lambda) = \left[\begin{matrix} \cosh \frac\lambda2 & -\sinh \frac\lambda 2\\-\sinh \frac\lambda 2 &\cosh \frac\lambda2\end{matrix}\right]
\text{ ($2 \times 2$ blocks)}, \qquad \lambda \in \bR.
\end{equation}
The one parameter subgroup $e(\lambda)$ of $SU(2,2)$ acts on $z \in T$, through~\eqref{eq:SU22action}, as
\[
\undertilde{\nu_\lambda(z)} := \undertilde{e(\lambda) z} =  \tau(\lambda,z)^{-1}\left[z_0\cosh \lambda - \frac12 (1+z^2)\sinh \lambda + \bdz \cdot \bds\right],
\]
with
\begin{equation}\label{eq:gamma}
\tau(\lambda, z) = \frac12(1+\cosh \lambda)-\frac12 z^2(1-\cosh \lambda) -z_0 \sinh \lambda.
\end{equation}
Extending $\nu_\lambda$ to the boundary $\bR^4$ of $T$ one gets a (singular) map
\begin{equation}\label{eq:nul}
\nu_\lambda : \bR^4 \setminus S_\lambda \to \bR^4, \qquad x=(x_0,\bdx) \mapsto \nu_\lambda(x)=\frac1{\tau(\lambda,x)}\left(x_0 \cosh \lambda - \frac12 (1+x^2) \sinh \lambda, \bdx\right)
\end{equation}
with
\[
    S_{\lambda} = \{x \in \bR^4\,:\,\tau(\lambda,x)=\sinh^2(\lambda/2)\big[(x_0-\coth(\lambda/2))^2-|\bdx|^2\big]=0\}
\]
and it is known that $O_1 \subset \bR^4 \setminus S_\lambda$ is a $\nu_{\lambda}-$invariant set, so that the family ${\nu_{\lambda}}$, $\lambda \in \mathbb{R}$, acts on $O_1$ as a one-parameter group of smooth diffeomorphisms~\cite[Sec.\ 3]{HL}. Fig.~\ref{fig:modflow} illustrates the flow of $\nu_\lambda$, $\lambda \in \bR$ (which is covariant under spatial rotations), on the $x_0, x_1$ plane (note that $\nu_\l$ extends to a diffeomorphism of the conformal compactification of $\bR^4$~\cite{BGL}, but we will not need this in the following).
\begin{figure}[h!]
\setlength{\unitlength}{1cm}
\centering
\begin{picture}(12.7,12.7)(-6.35,-6.35)
\put(-6.35,-6.35){\includegraphics{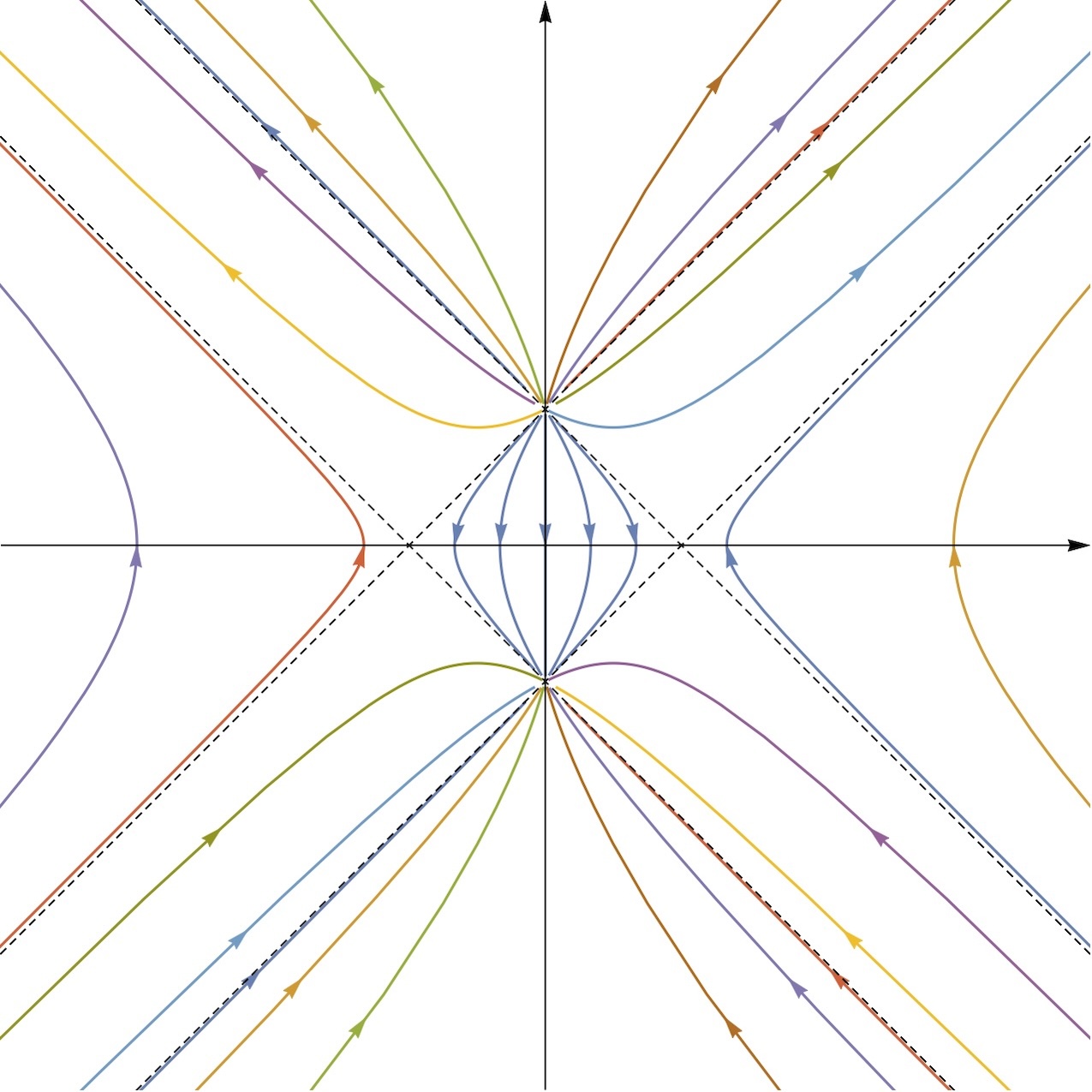}}
\put(1.5,-0.45){$1$}
\put(-1.85,-0.45){$-1$}
\put(6,0.2){$x_1$}
\put(0.2,6.1){$x_0$}
\end{picture}
\caption{\label{fig:modflow} The $O_1$ modular flow on the $x_0, x_1$ plane. 
The trajectories are continuous on the conformal compactification of Minkowski space. Those outside $O_1$ are singular on Minkowski, and their different branches share the same color.}
\end{figure}
\medskip

We now come to the announced formula for the action of the modular group. As a first step, we compute the Jacobian of the map $\nu_\l$.

\begin{Lemma}\label{lem:jacobian}
For any $\lambda \in \mathbb{R}$ and $x\in \bR^4\setminus S_\lambda$, the Jacobian determinant of $\nu_{\lambda}$ is given by
\begin{equation}\label{eq:jacobian}
    J_{\nu_{\lambda}}(x)=\frac{\sgn (\tau(\lambda,x))}{\tau(\lambda,x)^4} = \frac{\sgn (\tau(\lambda,x))}{\det(\cosh \frac\l2-\sinh \frac\l 2 \undertilde x)^4}.
\end{equation}
\end{Lemma}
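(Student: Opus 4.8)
The plan is to sidestep the direct expansion of the $4\times 4$ Jacobian matrix of~\eqref{eq:nul} and instead use the realization of $\nu_\lambda$ as (the boundary value of) the fractional linear action~\eqref{eq:SU22action} of $e(\lambda)$ on $2\times 2$ Hermitian matrices. Writing $e(\lambda)$ in $2\times 2$ blocks as $\left[\begin{smallmatrix} a & b \\ c & d\end{smallmatrix}\right]$ with $a=d=\cosh\tfrac\lambda2\,\id$, $b=c=-\sinh\tfrac\lambda2\,\id$, and putting $M=M(\lambda,x):=c\,\undertilde x+d=\cosh\tfrac\lambda2\,\id-\sinh\tfrac\lambda2\,\undertilde x$, formula~\eqref{eq:SU22action} says that $\undertilde{\nu_\lambda(x)}=\Phi(\undertilde x)$ with $\Phi(H):=(aH+b)(cH+d)^{-1}$ (a short computation using Cayley--Hamilton recovers~\eqref{eq:nul} from this). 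Since $L\colon x\mapsto\undertilde x$ is a fixed $\bR$-linear isomorphism of $\bR^4$ onto the space $\mathrm{Herm}_2(\bC)$ of $2\times 2$ Hermitian matrices, $\nu_\lambda=L^{-1}\circ\Phi\circ L$, so that $J_{\nu_\lambda}(x)=\det_{\bR}\!\big(d\Phi(\undertilde x)\big)$, the $\bR$-determinant of $d\Phi(\undertilde x)$ as an endomorphism of $\mathrm{Herm}_2(\bC)$ (the contributions of $L$ and $L^{-1}$ cancel, the determinant of an endomorphism being basis-independent).

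Next I would differentiate $\Phi$. From $\Phi(H)=(aH+b)M(H)^{-1}$ with $M(H)=cH+d$, using $d(M^{-1})=-M^{-1}(dM)M^{-1}$ and $ad-bc=\cosh^2\tfrac\lambda2-\sinh^2\tfrac\lambda2=1$, a one-line rearrangement gives
\[
 d\Phi(H)\cdot K=M(H)^{-1}\,K\,M(H)^{-1},\qquad K\in\mathrm{Herm}_2(\bC),
\]
which is defined precisely where $\det M\neq 0$, i.e.\ on $\bR^4\setminus S_\lambda$. To compute its determinant, diagonalise $M=U\Lambda U^\dagger$ with $U$ unitary and $\Lambda=\mathrm{diag}(\mu_1,\mu_2)$, $\mu_i\in\bR\setminus\{0\}$. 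The conjugation $K\mapsto U^\dagger K U$ preserves the positive definite form $\mathrm{tr}(K^2)$ on $\mathrm{Herm}_2(\bC)$ and depends continuously on $U$ in the connected group $U(2)$, hence has determinant $+1$; therefore $\det_{\bR}(d\Phi)=\det_{\bR}\big(K\mapsto\Lambda^{-1}K\Lambda^{-1}\big)$. In the basis $\{E_{11},E_{22},\sigma_1,\sigma_2\}$ of $\mathrm{Herm}_2(\bC)$ this last map is diagonal with eigenvalues $\mu_1^{-2},\mu_2^{-2},(\mu_1\mu_2)^{-1},(\mu_1\mu_2)^{-1}$, whence $\det_{\bR}(d\Phi)=(\mu_1\mu_2)^{-4}=(\det M)^{-4}$.

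It then remains to identify $\det M$ with $\tau(\lambda,x)$: the eigenvalues of $\undertilde x$ being $x_0\pm|\bdx|$, those of $M$ are $\cosh\tfrac\lambda2-\sinh\tfrac\lambda2(x_0\pm|\bdx|)$, with product $(\cosh\tfrac\lambda2-x_0\sinh\tfrac\lambda2)^2-|\bdx|^2\sinh^2\tfrac\lambda2$, and the half-angle identities $\cosh^2\tfrac\lambda2=\tfrac12(1+\cosh\lambda)$, $\sinh^2\tfrac\lambda2=\tfrac12(\cosh\lambda-1)$, $\sinh\tfrac\lambda2\cosh\tfrac\lambda2=\tfrac12\sinh\lambda$, together with $x^2=x_0^2-|\bdx|^2$, turn this into $\tau(\lambda,x)$ as in~\eqref{eq:gamma}. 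Combined with the previous step this yields both equalities in~\eqref{eq:jacobian}, the overall sign being fixed by noting that $J_{\nu_\lambda}$ is continuous and nowhere vanishing on $\bR^4\setminus S_\lambda$, hence of constant sign on each connected component, which one determines by evaluation at a convenient point (e.g.\ $x=0$, where $\tau(\lambda,0)=\cosh^2\tfrac\lambda2>0$ and $\nu_0=\mathrm{id}$). The differentiation of $\Phi$ and the eigenvalue bookkeeping are routine; the point that needs the most care is precisely this sign analysis across the singular locus $S_\lambda$. A purely computational alternative — writing the Jacobian matrix as $\tfrac1\tau\big(A-\nu_\lambda(x)\otimes\nabla\tau\big)$ with $A_{ab}=\partial_b f_a$, $(f_0,\dots,f_3)=\big(x_0\cosh\lambda-\tfrac12(1+x^2)\sinh\lambda,\,x_1,x_2,x_3\big)$, and applying the matrix determinant lemma — is elementary but algebra-heavy, and is exactly what the fractional linear viewpoint streamlines.
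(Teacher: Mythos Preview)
Your approach --- realising $\nu_\lambda$ as $L^{-1}\circ\Phi\circ L$ with $\Phi$ the fractional linear map on $\mathrm{Herm}_2(\bC)$, and computing $\det_{\bR}(d\Phi)$ by diagonalising $M$ --- is genuinely different from the paper's, which passes to light-cone spherical coordinates $(u,v,\theta,\varphi)$ and treats the cases $\tau>0$, $\tau<0$ separately. Your route is cleaner: the identity $d\Phi(H)\cdot K=M^{-1}KM^{-1}$ (which uses that the blocks of $e(\lambda)$ are scalar, so $aH+b$ and $M=cH+d$ commute) together with your eigenvalue bookkeeping is correct and yields $\det_{\bR}(d\Phi)=(\mu_1\mu_2)^{-4}=(\det M)^{-4}=\tau(\lambda,x)^{-4}$ on all of $\bR^4\setminus S_\lambda$.

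This is precisely where your argument and the stated formula part company. Your computation already gives $J_{\nu_\lambda}=\tau^{-4}>0$ with no sign left to determine; the claimed $\sgn(\tau)/\tau^4$ disagrees with it on $\{\tau<0\}$. Your closing remarks about ``fixing the overall sign'' by connectedness and evaluation at $x=0$ are therefore vacuous: there is nothing to fix, and $x=0$ lies in the component $\{\tau>0\}$ anyway. In fact a direct check confirms that \emph{your} formula is the correct one: at $\lambda=\ln 3$, $x=(2,1,0,0)$ one has $\tau=-\tfrac13$, and the Jacobian matrix of~\eqref{eq:nul} is $\mathrm{diag}(3,3,-3,-3)$ with determinant $81=\tau^{-4}$, not $-81$. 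The paper's coordinate proof drops, in the $\tau<0$ branch, the factor $-1$ coming from the angular part $(\theta,\varphi)\mapsto(\pi-\theta,\pi+\varphi)$, which would cancel the $-1$ from the swap $u'\leftrightarrow v'$. This is harmless downstream --- the change of variables in Thm.~\ref{thm:modularH} only uses $|J_{\nu_\lambda}|=\tau^{-4}$, and~\eqref{eq:gammainv} follows directly from the matrix identity $M(-\lambda,\nu_\lambda(x))=M(\lambda,x)^{-1}$ --- but you should not try to bend your (correct) computation to manufacture the $\sgn(\tau)$.
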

\begin{proof}
It is convenient to introduce light cone spherical coordinates $(u,v,\theta,\varphi)$, where $\theta$, $\varphi$ are the spherical angles of $\bdx \in \bR^3$ and
\[
       u = x_0 + |\bdx|, \quad  v = x_0 - |\bdx|.
\]
In these coordinates, 
\[\begin{split}
        \tau(\lambda,x) & = \frac{1}{2}(1+\cosh \l)-\frac{1}{2}uv(1-\cosh \lambda)-\frac{u+v}{2}\sinh \l= \\ 
        &= \frac{1}{4e^\lambda}[(1+u)(1+v)+e^{2\lambda}(1-u)(1-v)+2e^{\lambda}(1-uv)]\\
        & = \frac{1}{4e^{\lambda}}[(1+u)+e^{\lambda}(1-u)]\cdot [(1+v)+e^{\lambda}(1-v)],
\end{split}\]
and the action of $\nu_\l$ is given by $\nu_\l(u,v,\theta,\varphi) = (u',v',\theta',\varphi')$, where for $\tau(\lambda, x) > 0$ one gets, from~\eqref{eq:nul}, $\theta'=\theta$, $\varphi'= \varphi$,
\[\begin{split}
u ' &= x_0'+|\bdx'| = \frac{1}{\tau(\lambda,x)}\left(x_0\cosh\lambda-\frac12(1+x^2)\sinh \lambda+|\bdx|\right) \\
&= \frac{1}{2\tau(\lambda,x)}[(u+v) \cosh \l-(1+uv)\sinh \l+u-v]\\
&= \frac{1}{4e^\lambda\tau(\lambda,x)}[(1+u)(1+v) - e^{2\lambda}(1-u)(1-v)+2e^\lambda (u-v)] = \frac{(1+u)-e^\lambda(1-u)}{(1+u)+e^\lambda(1-u)},
\end{split}\]
and, similarly,
\[
v' = \frac{(1+v)-e^\lambda(1-v)}{(1+v)+e^\lambda(1-v)},
\]
while for $\tau(\lambda, x) < 0$, being $|\bdx'| = - |\bdx|/\tau(\lambda,x)$, one gets
\[
u' = \frac{(1+v)-e^\lambda(1-v)}{(1+v)+e^\lambda(1-v)},\; v' = \frac{(1+u)-e^\lambda(1-u)}{(1+u)+e^\lambda(1-u)},\;\theta' = \pi - \theta, \;\varphi' = \pi+\varphi.
\]
By a straightforward computation the Jacobian determinant of the map $(u,v,\theta,\varphi) \mapsto (x_0,\bdx)$ is seen to be $\frac18(u-v)^2 \sin \theta$. Taking then into account that
\[
u'-v' = 2|\bdx'| = \frac{2|\bdx|}{|\tau(\lambda,x)|} = \frac{u-v}{|\tau(\lambda,x)|},
\]
one finds
\[\begin{split}
J_{\nu_\lambda}(x)&= \frac{(u'-v')^2}8 \sin \theta' \,\left(\frac{\partial u'}{\partial u} \frac{\partial v'}{\partial v}-\frac{\partial u'}{\partial v}\frac{\partial v'}{\partial u}\right) \frac{8}{(u-v)^2 \sin \theta}\\
& =\sgn( \tau(\lambda,x)) \frac1{\tau(\lambda,x)^2}\frac{4e^\lambda}{[(1+u)+e^\lambda(1-u)]^2}\frac{4e^\lambda}{[(1+v)+e^\lambda(1-v)]^2} = \frac{\sgn( \tau(\lambda,x))}{\tau(\lambda,x)^4}.
\end{split}\]
The last equality in the statement is then a simple computation.
\end{proof}

Observe that since $e(\lambda)$, $\lambda \in \bR$, is a one parameter group, one gets
\[
\undertilde{\nu_{-\lambda}(\nu_\lambda(z))} = \undertilde{e(-\lambda)e(\lambda)z} = \undertilde{z}, \qquad z \in T,
\] 
i.e., $\nu_{-\lambda}\cdot \nu_{\lambda}=\mathrm{id}$ on $T$. This entails that, for $x \in \bR^4\setminus S_\lambda$, the limit of $\nu_{-\lambda}(\nu_\lambda(x+iy)) = x+iy$ as $y \to 0$ inside $V_+$ belongs to $\bR^4$, and as a consequence 
$\nu_\l(\bR^4\setminus S_\lambda) \subset \bR^4\setminus S_{-\l}$ and $\nu_{-\lambda} \cdot \nu_\lambda = \mathrm{id}$ also on $\bR^4\setminus S_\lambda$. It follows that $J_{\nu_{-\lambda}}(\nu_\lambda(x)) J_{\nu_\lambda}(x) = 1$ for $x \in \bR^4\setminus S_\lambda$, so thanks to the previous lemma one deduces that $\tau(-\lambda, \nu_{\lambda}(x))$ and $\tau(\lambda,x)$ have the same sign and
\begin{equation}\label{eq:gammainv}
    \tau(-\lambda, \nu_{\lambda}(x))= \frac{1}{\tau(\lambda,x)} , \qquad x \in \bR^4\setminus S_\lambda, \, \lambda \in \bR.
\end{equation}

In view of the above, if $f \in C^\infty_c(\bR^4\setminus S_{2\pi\lambda},\bC^2)$, the function $f\circ \nu_{-2\pi\lambda}$, whose support is the compact set $\nu_{2\pi\lambda}(\supp f) \subset \bR^4\setminus S_{-2\pi\lambda}$, vanishes in a neighbourhood of $S_{-2\pi\lambda}$. Therefore for any given $\lambda \in \bR$, the formula
\begin{equation}\label{eq:Clambda}
        (E_\lambda f)(x)=\tau(-2\pi\lambda,x)^{-2}[c_\lambda -s_\lambda \undertilde{\nu_{-2\pi \lambda} (x)} ]^t f(\nu_{-2\pi \lambda} (x)), \qquad x \in \bR^4, 
\end{equation}
where $c_\lambda := \cosh(\pi \lambda)$, $s_\lambda := \sinh(\pi \lambda)$, provides a well defined linear map $E_\lambda : C^\infty_c(\bR^4\setminus S_{2\pi\lambda},\bC^2) \to C^\infty_c(\bR^4\setminus S_{-2\pi\lambda},\bC^2)$.

From now on, we will only consider the modular operator of the algebra $\mathscr{F}(O_1)$, which we will denote simply by $\Delta$.

\begin{Theorem}\label{thm:modularH}
For any $\lambda \in \bR$ and  $f \in C^\infty_c(\bR^4\setminus S_{2\pi\lambda},\bC^2)$, there holds
\begin{equation}\label{eq:modularphifO}
\Delta^{i\lambda} \phi(f)\Omega = \phi(E_\lambda f)\Omega, \qquad \Delta^{i\lambda} \phi(f)^*\Omega = \phi(E_\lambda f)^*\Omega,
\end{equation}
\end{Theorem}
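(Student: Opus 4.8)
The plan is to substitute the explicit form~\eqref{eq:modular} of $\Delta^{i\lambda}$ into the right-hand side of~\eqref{eq:limphiz} and use the transformation rule~\eqref{eq:SU22phi}. From the block form of $g := e(2\pi\lambda)$ one reads off $a = d = c_\lambda\id$ and $b = c = -s_\lambda\id$, so that $\undertilde z\, c^\dagger + d^\dagger = c\,\undertilde z + d = c_\lambda\id - s_\lambda\undertilde z$; the purely algebraic computation underlying the second equality in Lemma~\ref{lem:jacobian} (valid verbatim for complex $z$) gives $\det(c_\lambda\id - s_\lambda\undertilde z) = \tau(2\pi\lambda,z)$, while $gz = \nu_{2\pi\lambda}(z)$ by definition of $\nu_\lambda$. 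Hence~\eqref{eq:SU22phi} becomes
\[
U(e(2\pi\lambda))\,\phi(z)\Omega = \tau(2\pi\lambda,z)^{-2}\,(c_\lambda\id - s_\lambda\undertilde z)\,\phi\big(\nu_{2\pi\lambda}(z)\big)\Omega ,\qquad z \in T,
\]
and similarly for $\phi^*$ with $(c_\lambda\id - s_\lambda\undertilde z)$ replaced by its transpose. Since $U(e(2\pi\lambda))$ is bounded it commutes with the Bochner integral and the limit in~\eqref{eq:limphiz}, giving
\[
\Delta^{i\lambda}\phi(f)\Omega = \lim_{\substack{y\to 0\\ y\in V_+}}\int_{\bR^4} dx\; f(x)^t\,\tau(2\pi\lambda,x+iy)^{-2}\big(c_\lambda\id - s_\lambda\undertilde{x+iy}\big)\,\phi\big(\nu_{2\pi\lambda}(x+iy)\big)\Omega .
\]

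Because $\supp f \subset \bR^4\setminus S_{2\pi\lambda}$, the scalar factor $\tau(2\pi\lambda,\cdot)$ and the map $\nu_{2\pi\lambda}$ are smooth on a neighbourhood of $\supp f$, and the $y\to 0$ limit above equals the pairing of the test function $x\mapsto f(x)^t\tau(2\pi\lambda,x)^{-2}(c_\lambda\id - s_\lambda\undertilde x)$ with the distributional boundary value of $z\mapsto\phi(\nu_{2\pi\lambda}(z))\Omega$. Since $z\mapsto\phi(z)\Omega$ is strongly analytic on $T$ with boundary value the distribution $\phi(\,\cdot\,)\Omega$, and $\nu_{2\pi\lambda}$ is a biholomorphism of $T$ onto itself restricting to a diffeomorphism of $\bR^4\setminus S_{2\pi\lambda}$ onto $\bR^4\setminus S_{-2\pi\lambda}$, this boundary value is the pullback of $\phi(\,\cdot\,)\Omega$ along $\nu_{2\pi\lambda}$; unwinding the pullback through the change of variable $v = \nu_{2\pi\lambda}(x)$ (whose Jacobian is $|J_{\nu_{-2\pi\lambda}}(v)|$) one obtains
\[
\Delta^{i\lambda}\phi(f)\Omega = \int_{\bR^4} dv\; |J_{\nu_{-2\pi\lambda}}(v)|\, f\big(\nu_{-2\pi\lambda}(v)\big)^t\,\tau\big(2\pi\lambda,\nu_{-2\pi\lambda}(v)\big)^{-2}\big(c_\lambda\id - s_\lambda\undertilde{\nu_{-2\pi\lambda}(v)}\big)\,\phi(v)\Omega ,
\]
the last integral being understood in the distributional boundary-value sense of~\eqref{eq:limphiz}.

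It then remains to identify the integrand with $(E_\lambda f)(v)^t\phi(v)\Omega$. By Lemma~\ref{lem:jacobian}, $|J_{\nu_{-2\pi\lambda}}(v)| = \tau(-2\pi\lambda,v)^{-4}$, and by~\eqref{eq:gammainv}, $\tau(2\pi\lambda,\nu_{-2\pi\lambda}(v)) = \tau(-2\pi\lambda,v)^{-1}$; the two scalar factors combine to $\tau(-2\pi\lambda,v)^{-2}$, and transposing reproduces exactly the definition~\eqref{eq:Clambda} of $E_\lambda f$, which is a test function in $C^\infty_c(\bR^4\setminus S_{-2\pi\lambda},\bC^2)$, so that the integral equals $\phi(E_\lambda f)\Omega$ by~\eqref{eq:limphiz}. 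The statement for $\phi(f)^*\Omega$ follows by the same steps applied to the $\phi^*$ versions of~\eqref{eq:SU22phi} and~\eqref{eq:limphiz}, taking hermitian conjugates and using that $\undertilde y$ is hermitian, hence $\overline{\undertilde y} = \undertilde y^{t}$, for $y\in\bR^4$.

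The matrix bookkeeping and the scalar identity $|J_{\nu_{-2\pi\lambda}}(v)|\,\tau(2\pi\lambda,\nu_{-2\pi\lambda}(v))^{-2} = \tau(-2\pi\lambda,v)^{-2}$ are routine given Lemma~\ref{lem:jacobian} and~\eqref{eq:gammainv}. The main obstacle is the middle step: the rigorous identification, at the level of boundary-value distributions, of the boundary value of $z\mapsto\phi(\nu_{2\pi\lambda}(z))\Omega$ with the pullback of $\phi(\,\cdot\,)\Omega$ along $\nu_{2\pi\lambda}$. This requires checking that $\nu_{2\pi\lambda}$ maps $V_+$-directed approaches to the real boundary into $V_+$-directed approaches (so that the analytic function $z\mapsto\phi(\nu_{2\pi\lambda}(z))\Omega$ on $T$ has the expected boundary value) and keeping track of the orientation reversal of $\nu_{\pm 2\pi\lambda}$ on the components of $\bR^4\setminus S_{\pm 2\pi\lambda}$ where $\tau < 0$; this is the reason the Jacobian in Lemma~\ref{lem:jacobian} carries the factor $\sgn\tau$ whereas the change-of-variables formula above involves its absolute value, the two effects cancelling so that no sign appears in~\eqref{eq:Clambda}. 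This is handled by a contour-deformation argument entirely parallel to the scalar case of~\cite{LM}.
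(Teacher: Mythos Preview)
Your argument follows the same overall architecture as the paper's (apply~\eqref{eq:SU22phi} with $g=e(2\pi\lambda)$, change variables via $\nu_{\pm 2\pi\lambda}$, and use Lemma~\ref{lem:jacobian} together with~\eqref{eq:gammainv} to identify the result with $E_\lambda f$), and the matrix and Jacobian bookkeeping is done correctly.

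The substantive difference is in how the boundary limit $y\to 0$ is controlled. You take the limit first and then invoke the general fact that the boundary value of $z\mapsto\phi(\nu_{2\pi\lambda}(z))\Omega$ is the distributional pullback of $\phi(\cdot)\Omega$ along $\nu_{2\pi\lambda}$; you correctly flag this as the delicate step and defer it to a contour argument in the style of~\cite{LM}. The paper instead avoids this entirely by a regularization trick: before taking $y\to 0$ it applies the bounded operator $U(p(1,iy'))$, which shifts the argument of $\phi$ by $iy'\in iV_+$ and hence keeps the vector $\phi(\nu_{2\pi\lambda}(\nu_{-2\pi\lambda}(x)+iy)+iy')\Omega$ inside the tube $T$, where $z\mapsto\phi(z)\Omega$ is a genuine continuous (analytic) function. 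The change of variables is then performed on an ordinary Bochner integral with compactly supported continuous integrand, the $y\to 0$ limit goes through by dominated convergence, and only afterwards is $y'\to 0$ taken using strong continuity of the translation semigroup and~\eqref{eq:limphiz}. This two-parameter $(y,y')$ device makes the argument entirely elementary and removes any need to discuss pullbacks of boundary-value distributions or the $V_+$-compatibility of the approach $\nu_{2\pi\lambda}(x+iy)\to\nu_{2\pi\lambda}(x)$.
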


\begin{proof} 
We consider the 
 vector-valued function of $y \in V_+$ defined by
\[
    I_y(f) :=  \int_{\mathbb{R}^4} dx f(x)^t \Delta^{i\lambda}\phi(x+iy)\Omega = \int_{\bR^4}dx\, f(x)^t \frac{c_\lambda-s_\lambda (\undertilde{x+iy})}{\det(c_\lambda-s_\lambda (\undertilde{x+iy}))^2} \phi(\nu_{2\pi\lambda}(x+iy))\Omega,
\]
where the last equality follows from~\eqref{eq:modular} and \eqref{eq:SU22phi}.
Then for fixed $y' \in V_+$ 
we have 
\[
    U(p(1,iy'))I_y(f)= 
    \int_{\mathbb{R}^4}dx\,f(x)^t \frac{c_\lambda-s_\lambda (\undertilde{x+iy})}{\det(c_\lambda-s_\lambda (\undertilde{x+iy}))^2}\phi(\nu_{2\pi \lambda}(x+iy)+iy')\Omega
\]
which, through the change of variables $x \mapsto \nu^{-1}_{2\pi\lambda}(x)$ becomes, thanks to Lemma~\ref{lem:jacobian},
\begin{multline*}
    U(p(1,iy'))I_y(f) = \\
    =\int_{\mathbb{R}
^4}dx \, f(\nu^{-1}_{2\pi \lambda} (x))^t \frac{c_\lambda-s_\lambda\undertilde{(\nu^{-1}_{2\pi \lambda} (x)+iy)}}{\tau(-2\pi\lambda,x)^{4} \det\left[c_\lambda-s_\lambda(\undertilde{\nu^{-1}_{2\pi \lambda} (x)+iy})\right]^2}     \phi(\nu_{2\pi \lambda}(\nu^{-1}_{2\pi \lambda} (x)+iy)+iy')\Omega.
\end{multline*}
Since the integrand function has compact support in $x$ independent of $y$, and is a continuous function of $x$, $y$, it can be bounded by a constant times the characteristic function of $\supp(f\circ\nu_{-2\pi\lambda})$ uniformly for $y$ in a neighbourhood of the origin. So, by the dominated convergence theorem, if $y \rightarrow 0$,
\begin{equation}
\label{eq:limy0}
    U(p(1,iy'))I_y(f) \xrightarrow[y\to 0]{}  \int_{\mathbb{R}^4} dx\, (E_\lambda f)(x)^t  \phi(x+iy')\Omega,
\end{equation}
where~\eqref{eq:jacobian} and~\eqref{eq:gammainv} were used.
On the other hand, 
\[
        U(p(1,iy'))I_y(f) =  U(p(1,iy'))\Delta^{i\lambda} \int_ {\mathbb{R}^4} dx f(x)^t\phi(x+iy) 
\]
that, by~\eqref{eq:limphiz} and the boundedness of $U(p(1,iy'))\Delta^{i\lambda}$,  converges to
$U(p(1,iy'))\Delta^{i\lambda}\phi(f)\Omega$ as $y \rightarrow 0$.
Hence, from this and \eqref{eq:limy0}
\[
    U(p(1,iy'))\Delta^{i\lambda}\phi(f)\Omega = \int_{\mathbb{R}^4} dx\, (E_\lambda f)(x)^t  \phi(x+iy')\Omega,
\]
that, thanks to the strong continuity of $U(p(1,iy'))$ in $y' \in V_+$, and the fact that $E_\lambda f$ is smooth and compactly supported, implies the first equation in~\eqref{eq:modularphifO} in the limit $y'\to 0, y' \in V_+$. The second equation is obtained by the same argument, taking into account that for $x\in \bR^4$ the matrix
\[
\tau(-2\pi\lambda, x)^{-2}[c_\lambda -s_\lambda\undertilde{\nu_{-2\pi\lambda}(x)}]
\]
is hermitian.
\end{proof}

\begin{Corollary}\label{cor:modularphif}
    For any $\lambda \in \bR$ and $f \in C^\infty_c(\bR^4\setminus S_{2\pi\lambda},\bC^2)$, there holds
\[
    \Delta^{i\lambda}\phi(f)\Delta^{-i \lambda}=\phi(E_\lambda f).
\]
\end{Corollary}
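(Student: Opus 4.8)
The plan is to deduce the operator identity from the one-particle statements of Theorem~\ref{thm:modularH} by exploiting that $\Delta^{i\lambda}$ is a second quantization. Recall that, $\phi$ being a free field on $\Gamma_-(\cH)$, there are on this Fock space the CAR annihilation and creation operators $a(h)$, $a^*(h)$, $h \in \cH$ (with $h \mapsto a(h)$ antilinear, $h \mapsto a^*(h)$ linear, $a(h)\Omega = 0$), and one has, for every $f \in \cS(\bR^4,\bC^2)$,
\[
\phi(f) = a^*(\phi(f)\Omega) + a(\phi(f)^*\Omega),
\]
because $\phi(f)$ is the sum of one creation and one annihilation operator, whose associated one-particle vectors are read off by applying $\phi(f)$ and $\phi(f)^*$ to the vacuum. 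On the other hand, by~\eqref{eq:modular} $\Delta^{i\lambda} = U(e(2\pi\lambda))$ is the second quantization $\Gamma(\Delta_1^{i\lambda})$ of its restriction $\Delta_1^{i\lambda}$ to the one-particle space $\cH$, so that it normalises the CAR operators: $\Delta^{i\lambda}a^\#(h)\Delta^{-i\lambda} = a^\#(\Delta_1^{i\lambda}h)$, where $a^\#$ denotes either $a$ or $a^*$.

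Granting this, the computation is short. For $f \in C^\infty_c(\bR^4\setminus S_{2\pi\lambda},\bC^2)$, conjugating the displayed decomposition of $\phi(f)$ by $\Delta^{i\lambda}$ gives
\[
\Delta^{i\lambda}\phi(f)\Delta^{-i\lambda} = a^*\big(\Delta_1^{i\lambda}\phi(f)\Omega\big) + a\big(\Delta_1^{i\lambda}\phi(f)^*\Omega\big).
\]
Now $\Delta_1^{i\lambda}$ is nothing but $\Delta^{i\lambda}$ acting on the one-particle vectors $\phi(f)\Omega$, $\phi(f)^*\Omega$, so both identities of Theorem~\ref{thm:modularH} give $\Delta_1^{i\lambda}\phi(f)\Omega = \phi(E_\lambda f)\Omega$ and $\Delta_1^{i\lambda}\phi(f)^*\Omega = \phi(E_\lambda f)^*\Omega$; since $E_\lambda f \in C^\infty_c(\bR^4\setminus S_{-2\pi\lambda},\bC^2) \subset \cS(\bR^4,\bC^2)$, applying the displayed decomposition once more, this time to $\phi(E_\lambda f)$, yields $\Delta^{i\lambda}\phi(f)\Delta^{-i\lambda} = \phi(E_\lambda f)$, as claimed.

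There is no real difficulty in this argument; the only point that needs an external input is the identity $\Delta^{i\lambda} = \Gamma(\Delta_1^{i\lambda})$, i.e.\ that the vacuum modular operator of the free Weyl net is the second quantization of the one-particle modular operator. This can be cited from~\cite{Hi} or obtained from the general theory of standard subspaces on fermionic Fock space, and it is the same fact that underlies the assertion in the Introduction that the Fock-space modular Hamiltonian is the second quantization of the one-particle one. The decomposition $\phi(f) = a^*(\phi(f)\Omega) + a(\phi(f)^*\Omega)$ is immediate from the explicit construction of the field, and, all operators involved being bounded, there are no domain issues to address.
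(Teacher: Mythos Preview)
Your proof is correct and follows essentially the same route as the paper: both use the decomposition of $\phi(f)$ into a creation and an annihilation operator, the second-quantization nature of $\Delta^{i\lambda}=U(e(2\pi\lambda))$, and the two identities of Theorem~\ref{thm:modularH} to identify the transformed one-particle vectors. Your write-up is slightly more streamlined in that you substitute the theorem directly into the conjugation formula rather than first extracting the intermediate relations $U(e(2\pi\lambda))H_f=H_{E_\lambda f}$, $U(e(2\pi\lambda))L_f=L_{E_\lambda f}$, but the substance is identical.
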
 

\begin{proof}
From~\cite[eq.\ (2.22)]{Hi} we have
\begin{equation}\label{eq_72}
    \phi(f)=a(L_f)+a(H_f)^*
\end{equation}
for specific elements $H_f = (0,h_f), L_f= (l_f,0) \in \cH$ (see~\eqref{eq:hflf} below), and where $a(\cdot)$, $a(\cdot)^*$ are the usual CAR algebra annihilation and creation operators. Then, taking into account that the representation $U$ of $SU(2,2)$ is of second quantization type, we have
\[
    \Delta^{i\lambda}\phi(f)\Omega =[a(U(e(2\pi \lambda))L_f)+a(U(e(2\pi \lambda))H_f)^*]\Omega=a(U(e(2\pi \lambda))H_f)^*\Omega,
\]
and on the other hand
\[
    \phi(E_\lambda f)\Omega=a(H_{E_\lambda f})^*\Omega,
\]
so that, by the previous theorem,
    $U(e(2\pi \lambda))H_f=H_{E_\lambda f}$.
In the same way from the action of the modular group on $\phi(f)^*\Omega$ one obtains
    $U(e(2\pi \lambda))L_f=L_{E_\lambda f}$. As a consequence of these relations
\[
    \Delta^{i\lambda}\phi(f)\Delta^{-i\lambda}= a(L_{E_\lambda f})+a(H_{E_\lambda f})^*=\phi(E_\lambda f),
\]
and the proof is complete.
\end{proof}

\section{Weyl field modular group in the Cauchy data formulation}\label{sec:cauchy}
 We denote by $\cW$ the space of functions $\Phi \in C^\infty(\bR^4,\bC^2)$ which are solutions of the (right-handed) Weyl equation
\begin{equation}\label{eq:weylwave}
\undertilde{\partial} \Phi(x) = (\partial_0 + \sigma_j \partial_j)\Phi(x)=0,
\end{equation}
with Cauchy data $\Phi_0(\bdx) := \Phi(0,\bdx)$ such that $\Phi_0 \in C^{\infty}_c(\bR^3,\bC^2)$. The elements of $\cW$ will be called (helicity $1/2$, right-handed) waves, and we will frequently identify a wave with its Cauchy data. In particular, given a test function $f \in C^\infty_c(\bR^4, \bC^2)$, the function
\[
\Phi^f(x) := \int_{\bR^4} dy\, S(x-y) \overline{f(y)}, \qquad x \in \bR^4
\]
(convolution in the distribution sense), belongs to $\cW$.

We are going to identify $\cW$ with (a dense subspace of) the one particle space $\cH = L^2(\bR^3)\oplus L^2(\bR^3)$ of the helicity $1/2$ field $\phi$ as described in~\cite[Sec.\ 2]{Hi}, in such a way that $\Phi^f$ gets identified with the vector $(\phi(f)+\phi(f)^*) \Omega$.

To this end, we must first introduce some notation. For $\bdp \in \bR^3$ we denote by $p_\pm := (\pm |\bdp|,\bdp)$ the corresponding 4-vectors on the future and past light cones, and the matrices $\undertilde{p}_\pm ( = - \tilde{p}_\mp)$ satisfy
\begin{equation}\label{eq:ppm}
\undertilde{p}_\pm \undertilde{p}_\pm = \pm2|\bdp| \undertilde{p}_\pm, \qquad \undertilde{p}_+ \undertilde{p}_- = 0,\qquad \undertilde{p}_+ - \undertilde{p}_- = 2|\bdp|,
\end{equation}
showing that $\pm\undertilde{p}_\pm/(2|\bdp|)$ are complementary orthogonal projections on $\bC^2$, known as the projections on $\pm 1/2$ helicity spinors. Another identity which we will use in the following is
\begin{equation}\label{eq:sigma2p}
\sigma_2 \tilde p_\pm \sigma_2 = \overline{\undertilde{p}_\pm}.
\end{equation}
We will also need the spinor, defined for almost all $\bdp \in \bR^3$,
\begin{equation}\label{eq:nu0p}
\nu_0(\bdp) := 
 \frac1{2|\bdp|\cos(\theta/2)}
\left[\begin{matrix} 
|\bdp|+p_3  \\
p_1+ip_2 
\end{matrix}\right] = 
\left[\begin{matrix} \cos(\theta/2) \\ \sin(\theta/2) e^{i\varphi}\end{matrix}\right], 
\end{equation}
with $\theta \in [0,\pi), \varphi \in [0,2\pi)$ the polar angles of $\bdp$. One verifies that $\undertilde{p}_-\nu_0(\bdp) = 0$, i.e., $\nu_0(\bdp)$ is a helicity $1/2$ spinor. Since also $|\nu_0(\bdp) |=1$, one also obtains
\begin{equation}\label{eq:nu0dyad}
\nu_0(\bdp) \nu_0(\bdp)^\dagger = \left[ \nu_0(\bdp)_{\alpha 1}\overline{\nu_0(\bdp)_{\beta 1}}\right]_{\alpha, \beta =1,2 } = \frac1{2|\bdp|}\undertilde{p}_+.
\end{equation}

 We now denote by $\bar \cW$ the completion of the space of waves with respect to the $L^2$ norm of the Cauchy data
\begin{equation}\label{eq:norm}
\| \Phi \|^2 = \int_{\bR^3} d\bdx | \Phi_0(\bdx)|^2 =\frac1{(2\pi)^3} \int_{\bR^3} d\bdp | \hat \Phi_0(\bdp)|^2.
\end{equation}
Moreover, instead of considering the obvious complex structure on $\bar \cW$, we will consider it as a complex vector space where multiplication by the imaginary unit is defined by
\[
(\imath \Phi)\hat{}_0(\bdp) := \imath(\bdp) \hat \Phi_0(\bdp) , \qquad 
\imath(\bdp) := \frac{i}{|\bdp|}(\bdp \cdot \bds)=\frac{i}{|\bdp|} \left[\begin{matrix} 
 p_3 & p_1-ip_2 \\ 
p_1+ip_2 & -p_3 
\end{matrix}\right].
\]
As a consequence, thanks to the identities
\[
\undertilde{p}_+ \iota(\bdp) = i \undertilde{p}_+, \qquad \iota(\bdp)^\dagger \undertilde{p}_- = i \undertilde{p}_-,
\]
$\bar \cW$ becomes a complex Hilbert space with the scalar product
\[
\langle \Psi, \Phi \rangle = \frac 1{(2\pi)^3} \int_{\bR^3} \frac{d\bdp}{2|\bdp|}\left[ \hat\Psi_0(\bdp)^\dagger \undertilde{p}_+ \hat\Phi_0(\bdp) - \hat\Phi_0(\bdp)^\dagger
 \undertilde{p}_- \hat\Psi_0(\bdp) \right], \qquad \Psi, \Phi \in \bar \cW,
 \]
 which, in view of~\eqref{eq:ppm}, induces the norm~\eqref{eq:norm}.
 
 \begin{Proposition}\label{prop:TequivH}
The map $V : \bar\cW \to \cH = L^2(\bR^3)\oplus L^2(\bR^3)$, $\Phi \mapsto (l_\Phi, h_\Phi)$, with
\[
l_\Phi(\bdp) :=- \frac 1{(2\pi)^{3/2}} \nu_0(\bdp)^\dagger \hat{\Phi}_0(\bdp), \qquad h_\Phi(\bdp) := \frac1{(2\pi)^{3/2}} \hat{\Phi}_0(-\bdp)^\dagger \nu_0(\bdp), \qquad \bdp \in \bR^3,
\]
is unitary and satisfies $V \Phi^f = (\phi(f)+\phi(f)^*) \Omega$ for all $f \in C^\infty_c(\bR^4,\bC^2)$.
\end{Proposition}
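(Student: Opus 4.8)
The plan is to prove the two assertions of the proposition in turn: unitarity of $V$, and the intertwining identity $V\Phi^f=(\phi(f)+\phi(f)^*)\Omega$.

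\emph{Unitarity.} I would check $\bC$-linearity, isometry and surjectivity. For $\bC$-linearity one must verify $V(\imath\Phi)=iV\Phi$; since $(\imath\Phi)\hat{}_0(\bdp)=\imath(\bdp)\hat\Phi_0(\bdp)$ and $\imath(-\bdp)=-\imath(\bdp)$, this amounts to commuting $\imath(\bdp)$ past $\nu_0(\bdp)$, which is possible because $\undertilde{p}_-\nu_0(\bdp)=0$ gives $(\bdp\cdot\bds)\nu_0(\bdp)=|\bdp|\nu_0(\bdp)$, hence $\imath(\bdp)\nu_0(\bdp)=i\nu_0(\bdp)$, and, $\imath(\bdp)$ being anti-Hermitian, also $\nu_0(\bdp)^\dagger\imath(\bdp)=i\nu_0(\bdp)^\dagger$. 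For isometry I would compute $\|V\Phi\|_\cH^2=\int d\bdp\,|l_\Phi(\bdp)|^2+\int d\bdp\,|h_\Phi(\bdp)|^2$, use~\eqref{eq:nu0dyad} to replace $\nu_0(\bdp)\nu_0(\bdp)^\dagger$ by $\undertilde{p}_+/(2|\bdp|)$ in each term (passing through a trace in the $h$-term), and in the $h$-term change variables $\bdp\mapsto-\bdp$, under which $\undertilde{p}_+$ turns into $-\undertilde{p}_-$; the result is exactly the defining formula for $\langle\Phi,\Phi\rangle$, so $\|V\Phi\|_\cH=\|\Phi\|$, and by polarization $V$ preserves the scalar products. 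For surjectivity, note that $\nu_0(-\bdp)$ obeys $\undertilde{p}_+\nu_0(-\bdp)=0$, so $\{\nu_0(\bdp),\nu_0(-\bdp)\}$ is an orthonormal basis of $\bC^2$ for a.e.\ $\bdp$; given $(l,h)\in\cH$, the function $\hat\Phi_0(\bdp):=(2\pi)^{3/2}\big(-l(\bdp)\,\nu_0(\bdp)+\overline{h(-\bdp)}\,\nu_0(-\bdp)\big)$ belongs to $L^2(\bR^3,\bC^2)$, hence defines a $\Phi\in\bar\cW$, and using orthonormality one reads off $V\Phi=(l,h)$. Together with $\bC$-linearity and isometry this makes $V$ unitary.

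\emph{Intertwining.} By the decomposition $\phi(f)=a(L_f)+a(H_f)^*$ of~\cite{Hi}, with $L_f=(l_f,0)$ and $H_f=(0,h_f)$, and since the annihilation operators kill $\Omega$, one has $(\phi(f)+\phi(f)^*)\Omega=L_f+H_f=(l_f,h_f)\in\cH$; it remains to show $l_{\Phi^f}=l_f$ and $h_{\Phi^f}=h_f$. I would compute the Fourier transform of the Cauchy datum of $\Phi^f=S*\overline f$: from $S=\tilde\partial D$ and the given form of $D$ one obtains $\widehat D(p)=2\pi i\,\varepsilon(p_0)\delta(p^2)$ and $\widehat S(p)=2\pi\,\undertilde{p}\,\varepsilon(p_0)\delta(p^2)$, and integrating out $p_0$ against $\delta(p^2)=\frac1{2|\bdp|}\big(\delta(p_0-|\bdp|)+\delta(p_0+|\bdp|)\big)$ after restricting to $\{x_0=0\}$ yields $\widehat{\Phi^f_0}(\bdp)=\frac1{2|\bdp|}\big[\undertilde{p}_+\,\overline{\hat f(-p_+)}-\undertilde{p}_-\,\overline{\hat f(-p_-)}\big]$, a sum of contributions from the two light-cone sheets. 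Contracting with $\nu_0(\bdp)^\dagger$, the $\undertilde{p}_-$ term drops because $\nu_0(\bdp)^\dagger\undertilde{p}_-=0$ while $\nu_0(\bdp)^\dagger\undertilde{p}_+=2|\bdp|\,\nu_0(\bdp)^\dagger$ (both from~\eqref{eq:nu0dyad} and~\eqref{eq:ppm}), leaving $l_{\Phi^f}(\bdp)=-\frac1{(2\pi)^{3/2}}\nu_0(\bdp)^\dagger\overline{\hat f(-p_+)}$; evaluating $\widehat{\Phi^f_0}$ at $-\bdp$ and contracting with $\nu_0(\bdp)$ likewise isolates a single sheet and gives $h_{\Phi^f}(\bdp)=\frac1{(2\pi)^{3/2}}\hat f(p_+)^t\nu_0(\bdp)$. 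These are then identified with the expressions for $l_f$, $h_f$ from~\cite[Sec.~2]{Hi}.

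\emph{Main obstacle.} The genuinely delicate part is this last computation: getting the Fourier transform of the propagator $S=\tilde\partial D$ and its restriction to $\{x_0=0\}$ right with all conventions — the sign in the exponential, the powers of $2\pi$, the complex conjugation in $S*\overline f$, and the interchange $\bdp\leftrightarrow-\bdp$ relating the particle and antiparticle evaluations of $\hat f$ — and then matching the outcome, termwise and with the correct normalization, to the formulas for $l_f$ and $h_f$ in~\cite{Hi}. The Hilbert-space part (linearity, isometry, surjectivity) is short once $\imath(\bdp)\nu_0(\bdp)=i\nu_0(\bdp)$ and~\eqref{eq:nu0dyad} are in hand.
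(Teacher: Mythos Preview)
Your proposal is correct and follows essentially the same route as the paper: the paper also uses~\eqref{eq:nu0dyad} and the substitution $\bdp\mapsto-\bdp$ to verify isometry, inverts the map pointwise for surjectivity (via the matrix $A(\bdp)$ with rows $-\nu_0(\bdp)^\dagger$, $\nu_0(-\bdp)^\dagger$, which is your orthonormal-basis argument in matrix form), and for the intertwining computes $\hat\Phi^f_0(\bdp)=\frac1{2|\bdp|}\big[\undertilde{p}_+\hat{\bar f}(p_+)-\undertilde{p}_-\hat{\bar f}(p_-)\big]$ and contracts with $\nu_0$ exactly as you do. The only cosmetic difference is that the paper verifies $\langle V\Psi,V\Phi\rangle=\langle\Psi,\Phi\rangle$ for the full sesquilinear form (so $\bC$-linearity with respect to $\imath$ comes for free), whereas you check $V(\imath\Phi)=iV\Phi$ separately and then polarize.
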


\begin{proof}
To begin with, we notice that the entries of $\nu_0(\bdp)$ are bounded functions of $\bdp$ (see Eq.~\eqref{eq:nu0p}), so that $V$ maps $\bar \cW$ indeed into $\cH$. One then computes that, for $\Psi, \Phi \in \bar \cW$,
\[\begin{split}
\langle V\Psi, V\Phi \rangle &= \frac1{(2\pi)^3}\int_{\bR^3}d\bdp \left[\overline{l_\Psi(\bdp)}l_\Phi(\bdp) + \overline{h_\Psi}(\bdp)h_\Phi(\bdp)\right] \\
&= \frac1{(2\pi)^3} \int_{\bR^3}d\bdp \left[ \hat\Psi_0(\bdp)^\dagger \nu_0(\bdp) \nu_0(\bdp)^\dagger \hat\Phi_0(\bdp) + \overline{ \hat\Psi_0(-\bdp)^\dagger \nu_0(\bdp) \nu_0(\bdp)^\dagger \hat\Phi_0(-\bdp)}\,\right]\\
&= \frac1{(2\pi)^3} \int_{\bR^3}\frac{d\bdp}{2|\bdp|} \left[ \hat\Psi_0(\bdp)^\dagger \undertilde{p}_+ \hat\Phi_0(\bdp) + \overline{ \hat\Psi_0(-\bdp)^\dagger \undertilde{p}_+ \hat\Phi_0(-\bdp)}\,\right],
\end{split}\]
where in the last equality we used~\eqref{eq:nu0dyad}. Moreover, since $\undertilde{p}_+ \to -\undertilde{p}_-$ under the change of variables $\bdp \to -\bdp$, there holds
\[
 \int_{\bR^3}\frac{d\bdp}{2|\bdp|}  \overline{ \hat\Psi_0(-\bdp)^\dagger \undertilde{p}_+ \hat\Phi_0(-\bdp)} = -\int_{\bR^3}\frac{d\bdp}{2|\bdp|}  \hat\Phi_0(\bdp)^\dagger \undertilde{p}_- \hat\Psi_0(\bdp),
\]
which inserted into the previous equation shows that $\langle V\Psi,V\Phi\rangle = \langle \Psi, \Phi\rangle$, i.e., that $V$ is isometric. Moreover, $\psi \in \cH$ is in the range of $V$ if and only if $A(\bdp)\hat\Phi_0(\bdp) = [ \psi_1(\bdp) \; \overline{\psi_2(-\bdp)}]^t$ for some $\Phi \in \bar \cW$, where
\[
A(\bdp) = \left[\begin{matrix} -\nu_0(\bdp)^\dagger \\ \nu_0(-\bdp)^\dagger \end{matrix}\right] = 
\left[\begin{matrix}- \cos(\theta/2) &-\sin(\theta/2)e^{-i\varphi}\\ \sin(\theta/2) &-\cos(\theta/2)e^{-i\varphi} \end{matrix}\right].
\]
Then, since
\[
A(\bdp)^{-1} = \left[\begin{matrix} -\cos(\theta/2) &\sin(\theta/2)\\ -\sin(\theta/2) e^{i\varphi}&-\cos(\theta/2)e^{i\varphi} \end{matrix}\right]
\]
has bounded entries, we conclude that $V$ maps $\bar \cW$ onto $\cH$ and is then unitary.

 In order to prove the last statement, we first observe that for $f \in C^\infty_c(\bR^4, \bC^2)$,
 \[
(\phi(f)+\phi(f)^*) \Omega = (l_f, h_f) \in L^2(\bR^3) \oplus L^2(\bR^3), 
\]
with
\begin{equation}\label{eq:hflf}
l_f(\bdp) := -\frac 1{(2\pi)^{3/2}} \hat{\bar{f}}(p_+)_{\alpha}\overline{\nu_0(\bdp)_{\alpha }}, \qquad h_f(\bdp) := \frac 1{(2\pi)^{3/2}} \hat{f}(p_+)_{\alpha}\nu_0(\bdp)_{\alpha }.
\end{equation}
Then, we compute
\[
\hat \Phi^f_0(\bdp) = \frac1{2|\bdp|}\big[\undertilde{p}_+ \hat{\bar{f}}(p_+)-\undertilde{p}_- \hat{\bar{f}}(p_-)\big],
\]
which, in view of~\eqref{eq:ppm}, entails
\begin{equation}\label{eq:Psiff}
\undertilde{p}_+\big(\hat \Phi^f_0(\bdp) - \hat{\bar{f}}(p_+)\big) = 0 = \undertilde{p}_-\big(\hat \Phi^f_0(\bdp) - \hat{\bar{f}}(p_-)\big) .
\end{equation}
From the first one of the above equations, taking into account the hermiticity of $\undertilde{p}_+$ and the fact, observed above, that $\nu_0(\bdp)$ has helicity $1/2$, we then get
\[\begin{split}
\hat{\bar{f}}(p_+)_{\alpha}\overline{\nu_0(\bdp)_{\alpha }} &= \frac1{2|\bdp|}\hat{\bar{f}}(p_+)_{\alpha}\overline{(\undertilde{p}_+)_{\alpha \beta}}\overline{\nu_0(\bdp)_{\beta }} =  \frac1{2|\bdp|}(\undertilde{p}_+)_{\beta\alpha}\hat{\bar{f}}(p_+)_{\alpha}\overline{\nu_0(\bdp)_{\beta }} \\
&=\frac1{2|\bdp|}(\undertilde{p}_+)_{\beta\alpha}\hat{\Phi}^f_0(\bdp)_{\alpha}\overline{\nu_0(\bdp)_{\beta }} = \hat{\Phi}^f_0(\bdp)_{\alpha}\overline{\nu_0(\bdp)_{\alpha }},
\end{split}\]
which entails $l_f = l_{\Phi^f}$. Similarly, making the substitution $\bdp \to -\bdp$ in the second equation in~\eqref{eq:Psiff} one obtains $\hat{f}(p_+)_{\alpha}\nu_0(\bdp)_{\alpha } = \hat{\bar{\Phi}}^f_0(\bdp)_\alpha\nu_0(\bdp)_{\alpha }$ and therefore $h_f = h_{\Phi^f}$. 
\end{proof}

In view of the above discussion, we can now compute the action of the modular group on waves by pulling back its action on $\cH$ obtained in~Thm.~\ref{thm:modularH}. In the following, we will directly identify $\bar \cW$ with  $\cH$, avoiding to mention explicitly the unitary $V$, whenever this does not cause confusion.

\begin{Theorem}\label{thm:modularwave}
Given $\Phi \in \cW$ there is $\varepsilon > 0$ such that for $|\lambda| < \varepsilon$, $\Delta^{i\lambda} \Phi \in \cW$ and
\begin{equation}\label{eq:modularPsi}
\Delta^{i\lambda}\Phi(x) = \tau(-2\pi\lambda,x)^{-2} (c_\l +s_\l \undertilde x) \Phi(\nu_{-2\pi\lambda}(x)), \qquad x \in \bR^4 \setminus S_{-2\pi\lambda}.
\end{equation}
\end{Theorem}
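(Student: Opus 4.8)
The plan is to transport the formula of Theorem~\ref{thm:modularH} through the unitary $V$ of Proposition~\ref{prop:TequivH}. The essential point is that waves of the form $\Phi^f$, with $f \in C^\infty_c(\bR^4\setminus S_{2\pi\lambda},\bC^2)$, are dense in $\cW$ (in the graph topology needed to conclude $\Delta^{i\lambda}\Phi \in \cW$), so it suffices to establish~\eqref{eq:modularPsi} on such $\Phi^f$ and then argue by continuity/analyticity in $\lambda$.

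First I would compute, for $f \in C^\infty_c(\bR^4\setminus S_{2\pi\lambda},\bC^2)$, that $\Phi^{E_\lambda f}$ equals the right-hand side of~\eqref{eq:modularPsi} applied to $\Phi^f$. By Proposition~\ref{prop:TequivH}, $V\Phi^f = (\phi(f)+\phi(f)^*)\Omega$, and by Theorem~\ref{thm:modularH} together with the fact that $\Delta^{i\lambda}$ commutes with the CAR structure, $\Delta^{i\lambda}V\Phi^f = \Delta^{i\lambda}(\phi(f)+\phi(f)^*)\Omega = (\phi(E_\lambda f)+\phi(E_\lambda f)^*)\Omega = V\Phi^{E_\lambda f}$; hence $\Delta^{i\lambda}\Phi^f = \Phi^{E_\lambda f}$ as elements of $\bar\cW$, and in particular $\Delta^{i\lambda}\Phi^f \in \cW$. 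It then remains to identify $\Phi^{E_\lambda f}$ with the explicit transformation of $\Phi^f$. For this I would work with the Cauchy data: starting from $\hat\Phi^f_0(\bdp) = \frac1{2|\bdp|}[\undertilde p_+ \hat{\bar f}(p_+) - \undertilde p_- \hat{\bar f}(p_-)]$, I would instead use the fact that $\Phi^f$ is a genuine smooth solution of the Weyl equation with spacetime values $\Phi^f(x) = \int dy\, S(x-y)\overline{f(y)}$, so that $\Phi^{E_\lambda f}(x) = \int dy\, S(x-y)\overline{(E_\lambda f)(y)}$, and then change variables $y \mapsto \nu_{2\pi\lambda}(y)$ using Lemma~\ref{lem:jacobian}, the definition~\eqref{eq:Clambda} of $E_\lambda$, the identity~\eqref{eq:gammainv}, and the covariance of $S$ under the conformal transformation — i.e. the transformation law~\eqref{eq:SU22phi} rewritten at the level of the two-point function / propagator — to bring the integral into the form $\tau(-2\pi\lambda,x)^{-2}(c_\lambda + s_\lambda\undertilde x)\Phi^f(\nu_{-2\pi\lambda}(x))$. (Equivalently, one can verify directly that $x \mapsto \tau(-2\pi\lambda,x)^{-2}(c_\lambda+s_\lambda\undertilde x)\Phi(\nu_{-2\pi\lambda}(x))$ solves the Weyl equation whenever $\Phi$ does, which is exactly the classical conformal covariance of the Weyl operator with the weight dictated by~\eqref{eq:SU22phi}, and then match Cauchy data at $x_0=0$ against the $\cH$-space computation.)

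Next I would deal with the $\lambda$-dependence and the restriction $|\lambda|<\varepsilon$. Given a fixed $\Phi \in \cW$, its Cauchy data $\Phi_0$ have compact support in some ball $B_R$; I want $\supp \Phi$, or rather its intersection with the relevant spacetime slab, to avoid $S_{2\pi\lambda}$ so that the formula~\eqref{eq:Clambda}–type manipulation makes sense and $\Phi(\nu_{-2\pi\lambda}(\cdot))$ is again a smooth compactly-supported-Cauchy-data wave. Since $S_{2\pi\lambda}$ recedes to infinity as $\lambda \to 0$ and depends continuously on $\lambda$, there is $\varepsilon>0$ (depending on $R$) such that for $|\lambda|<\varepsilon$ the support considerations go through and the right-hand side of~\eqref{eq:modularPsi} defines an element of $\cW$; a standard approximation of $\Phi$ by waves $\Phi^{f_n}$ with $f_n \in C^\infty_c(\bR^4\setminus S_{2\pi\lambda},\bC^2)$ (convolving with the propagator, then cutting off in time) converging in the $\bar\cW$-norm, combined with the continuity of $\Delta^{i\lambda}$ and of the explicit right-hand side in the same norm, upgrades the identity from the dense set to all of $\cW$.

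\textbf{Main obstacle.} The routine but delicate part is the change-of-variables identity establishing $\Phi^{E_\lambda f} = \tau(-2\pi\lambda,\cdot)^{-2}(c_\lambda+s_\lambda\undertilde{(\cdot)})\Phi^f\circ\nu_{-2\pi\lambda}$ — i.e. checking that the matrix weight $\tau^{-2}(c_\lambda+s_\lambda\undertilde x)$ appearing here is precisely the one induced on waves (which are built from $\phi(f)\Omega + \phi(f)^*\Omega$, boundary values of the $T$-analytic functions) by the bulk transformation law~\eqref{eq:SU22phi}, and that the helicity projectors $\undertilde p_\pm$ conspire with $\nu_0(\bdp)$ so that no spurious factors survive. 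The sign/branch bookkeeping in $\nu_\lambda$ across $S_\lambda$ (the swap $u \leftrightarrow v$, $\theta \mapsto \pi-\theta$ for $\tau<0$ recorded in the proof of Lemma~\ref{lem:jacobian}) must be handled consistently, but on $O_1$ one has $\tau>0$ and for the global statement one relies on the analytic-continuation argument through the tube $T$, exactly as in Theorem~\ref{thm:modularH}, so the singular locus never actually obstructs the computation.
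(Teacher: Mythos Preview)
Your proposal is broadly correct and reaches the same conclusion, but it diverges from the paper's argument at two points, one minor and one substantive.

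First, the paper does not approximate: every $\Phi \in \cW$ is \emph{already} of the form $\Phi^f$ for some $f \in C^\infty_c(O_r,\bC^2)$, where $O_r$ is the double cone over a ball $B_r \supset \supp\Phi_0$ (the paper cites~\cite[Lemma~A.3]{Di}). So your density/approximation step is unnecessary --- one simply picks $\varepsilon$ so that $O_r \subset \bR^4\setminus S_{2\pi\lambda}$ for $|\lambda|<\varepsilon$ and works with this single $f$ throughout.

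Second --- and this is the substantive difference --- to identify $\Phi^{E_\lambda f}(x)$ with the right-hand side of~\eqref{eq:modularPsi}, the paper does \emph{not} compute the conformal transformation law of the propagator $S$ directly. Instead it uses a duality trick: pair against an arbitrary test function $g \in C^\infty_c(\bR^4\setminus S_{-2\pi\lambda},\bC^2)$, write $\int g^t \Phi^{E_\lambda f} = \langle\Omega,\{\phi(g),\phi(E_\lambda f)^*\}\Omega\rangle$ via the CAR~\eqref{eq:CAR}, and then invoke Corollary~\ref{cor:modularphif} to move the modular action from $f$ to $g$, obtaining $\langle\Omega,\{\phi(E_{-\lambda}g),\phi(f)^*\}\Omega\rangle = \int (E_{-\lambda}g)^t\,\Phi^f$. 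A change of variables $x\mapsto\nu_{-2\pi\lambda}(x)$ together with Lemma~\ref{lem:jacobian} and~\eqref{eq:gammainv} then yields the formula by arbitrariness of $g$. This sidesteps precisely the obstacle you flag: the matrix weight $\tau(-2\pi\lambda,x)^{-2}(c_\lambda+s_\lambda\undertilde x)$ simply drops out of the explicit definition~\eqref{eq:Clambda} of $E_{-\lambda}g$, with no need to verify covariance of $S$ or to track helicity projectors against $\nu_0(\bdp)$. Your direct approach via the transformation of $S$ should ultimately work, but it amounts to establishing that covariance from scratch; the paper's route is more economical because it recycles Corollary~\ref{cor:modularphif}, which was already proven from Theorem~\ref{thm:modularH}.
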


\begin{proof}
For $\Phi \in \cW$ there is an open ball $B_r \subset \bR^3$ of radius $r > 0$ centered at the origin which contains the support of $\Phi_0$. Then, arguing as in the proof of~\cite[Lemma A.3]{Di}, we can find $f \in C^\infty_c(O_r,\bC^2)$, $O_r \subset \bR^4$ the double cone with basis $B_r$, such that $\Phi = \Phi^f$. Now for sufficiently small $|\lambda|$ one has $B_r \subset \bR^4 \setminus S_{2\pi\lambda}$, and as a consequence, by the above discussion and Thm.~\ref{thm:modularH}, $\Delta^{i\lambda} \Phi = \Phi^{E_\lambda  f}$, which in particular entails that $\Delta^{i\lambda} \Phi \in \cW$ is smooth on $\bR^4$. Moreover if $g \in C^\infty_c(\bR^4\setminus S_{-2\pi\lambda}, \bC^2)$, one has, thanks to~\eqref{eq:CAR}, Cor.~\ref{cor:modularphif} and~\eqref{eq:Clambda},
\[\begin{split}
    \int_{\bR^4} dx\:g(x)^t \Phi^{E_\lambda f}(x) &=  \langle \Omega, \{\phi(g), \phi(E_\lambda  f)^*\}\Omega \rangle =  \langle \Omega, \{\phi(g), \Delta^{i\lambda}\phi(f)^* \Delta^{-i\lambda}\}\Omega \rangle \\
    &=  \langle \Omega, \{\Delta^{-i\lambda}\phi(g)\Delta^{i\lambda}, \phi(f)^* \}\Omega \rangle = \langle \Omega, \{\phi(E_{-\lambda} g), \phi( f)^* \}\Omega \rangle \\
    &= \int_{\bR^4} dx\, (E_{-\lambda} g)(x)^t\Phi^f(x)\\
    &=\int_{\bR^4} dx\, \tau(2\pi\lambda,x)^{-2} g(\nu_{2 \pi \lambda}(x))^t [c_\lambda + s_\lambda\undertilde{\nu_{2\pi \lambda}(x)}]  \Phi^f(x).
\end{split}
\]
Using then the change of variable $x \to \nu_{-2\pi \lambda}(x)$, Lemma~\ref{lem:jacobian} and Eq.~\eqref{eq:gammainv}, we get
\[\begin{split}
    \int_{\bR^4} dx\:g(x)^t \Phi^{E_\lambda f}(x) &= \int_{\bR^4} dx \frac{|J_{\nu_{-2 \pi \lambda}}(x)| }{\tau(2\pi\lambda,\nu_{-2\pi\lambda}(x))^2} g(x)^t[c_\lambda+s_\lambda\undertilde{x}] \Phi^f(\nu_{-2 \pi \lambda}(x)) \\
    &=\int_{\bR^4} dx\: \tau(-2 \pi \lambda,x)^{-2}g(x)^t[c_\lambda+s_\lambda\undertilde{x}] \Phi^f(\nu_{-2 \pi \lambda}(x)),
\end{split}\]
and formula~\eqref{eq:modularPsi} follows by the arbitrariness of $g$.
\end{proof}

We can now compute the action of the modular hamiltonian  $ \log \Delta = -i \frac d{d\lambda} \Delta^{i\lambda} |_{\lambda = 0}$ on waves $\Phi \in \cW$. To begin with, we notice that from~\eqref{eq:gamma}, \eqref{eq:nul} it follows
\begin{align*}
\left.\frac d{d\lambda} \tau(-2\pi\lambda,x)\right|_{\lambda = 0} &= 2\pi x_0, \quad x \in \bR^4,\\
\left.\frac d{d\lambda} \nu_{-2\pi\lambda}(x)\right|_{\lambda = 0} &=\pi (1-x_0^2-|\bdx|^2,- 2x_0 \bdx), \quad x \in \bR^4\setminus S_{-2\pi\lambda}.
\end{align*}

\begin{Lemma} For any $x\in \bR^4$, 
we have:
\[
   \left. \frac{d}{d \lambda} \tau(-2\pi\lambda,x)^{-2} (c_\l +s_\l \undertilde x)\right|_{\lambda=0}=\pi(\undertilde{x}-4x_0).
\]
\end{Lemma}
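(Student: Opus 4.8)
The plan is to differentiate the product $\tau(-2\pi\l,x)^{-2}(c_\l + s_\l\undertilde x)$ by the Leibniz rule and evaluate at $\l = 0$, using only elementary data that are already at hand. First I would record the values at $\l = 0$: from~\eqref{eq:gamma} one reads off $\tau(0,x) = \tfrac12(1+1) - \tfrac12 x^2(1-1) - 0 = 1$, while $c_0 = \cosh 0 = 1$ and $s_0 = \sinh 0 = 0$; moreover $\tfrac{d}{d\l}c_\l = \pi\sinh(\pi\l)$ and $\tfrac{d}{d\l}s_\l = \pi\cosh(\pi\l)$, so these derivatives equal $0$ and $\pi$ respectively at $\l = 0$. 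The one slightly less trivial input is the derivative of $\l \mapsto \tau(-2\pi\l,x)$ at the origin, but this has just been computed above, via~\eqref{eq:gamma} and the chain rule, to be $2\pi x_0$.

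Then I would apply the product rule in the form
\[
\left.\frac{d}{d\l}\Big[\tau(-2\pi\l,x)^{-2}(c_\l + s_\l\undertilde x)\Big]\right|_{\l=0} = -2\,\tau(0,x)^{-3}\Big(\!\left.\tfrac{d}{d\l}\tau(-2\pi\l,x)\right|_{\l=0}\!\Big)(c_0 + s_0\undertilde x) + \tau(0,x)^{-2}\Big(\!\left.\tfrac{d}{d\l}(c_\l+s_\l\undertilde x)\right|_{\l=0}\!\Big).
\]
Substituting $\tau(0,x) = 1$, $\left.\tfrac{d}{d\l}\tau(-2\pi\l,x)\right|_{\l=0} = 2\pi x_0$, $c_0 + s_0\undertilde x = 1$, and $\left.\tfrac{d}{d\l}(c_\l + s_\l\undertilde x)\right|_{\l=0} = \pi\undertilde x$, the right-hand side collapses to $-2(2\pi x_0)(1) + \pi\undertilde x = \pi(\undertilde x - 4x_0)$, which is the asserted identity.

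Since every factor is a completely explicit smooth function of $\l$ near $\l = 0$ — and $\tau(0,x) = 1 \neq 0$ ensures that the negative power $\tau(-2\pi\l,x)^{-2}$ is itself differentiable there, for every $x \in \bR^4$ with no restriction — there is no real obstacle here: the statement is a first-order Taylor computation. The only point I would double-check is the chain-rule factor produced by the argument $-2\pi\l$ inside $\tau$, since that is exactly what turns $x_0$ into the coefficient $-4x_0$ (rather than $-2x_0$) in the final answer.
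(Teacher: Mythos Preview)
Your proof is correct and follows exactly the same approach as the paper: apply the Leibniz rule to the product $\tau(-2\pi\l,x)^{-2}(c_\l+s_\l\undertilde x)$, use $\tau(0,x)=1$ together with the already-recorded derivative $\left.\tfrac{d}{d\l}\tau(-2\pi\l,x)\right|_{\l=0}=2\pi x_0$, and read off $-4\pi x_0+\pi\undertilde x$. Your version is slightly more detailed (explicitly noting the smoothness near $\l=0$ and the chain-rule factor), but the argument is identical.
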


\begin{proof}
By the above formulas, there holds
\[\begin{split}
 \left. \frac{d}{d \lambda} \tau(-2\pi\lambda,x)^{-2} (c_\l +s_\l \undertilde x)\right|_{\lambda=0}&=-2\tau(0,x)^{-3}\left.\frac d{d\lambda} \tau(-2\pi\lambda,x)\right|_{\lambda = 0} + \tau(0,x)^{-2} \left. \frac{d}{d \lambda} (c_\l +s_\l \undertilde x)\right|_{\lambda=0}\\
 &= -4\pi x_0 +\pi\undertilde{x}.
 \end{split}\]
\end{proof}

\begin{Theorem}\label{thm:modham} 
The one particle modular hamiltonian of the unit double cone for the helicity $1/2$ field on the wave space $\bar \cW$ is given by $\log \Delta = - \iota  K$, where for each $\Phi \in \cW$
\begin{equation}\label{eq:K}\begin{split}
    (K\Phi)_0(\bdx) &= -\pi \big[(1-r^2) \bds \cdot \nabla \Phi_0(\bdx) - \bdx \cdot \bds \Phi_0(\bdx) \big]
    \\
    &= -\pi\left[\begin{matrix}(1-r^2)\partial_3-x_3 & (1-r^2)(\partial_1-i\partial_2)-x_1+ix_2 \\
     (1-r^2)(\partial_1+i\partial_2)-x_1-ix_2 &-(1-r^2)\partial_3+x_3 \end{matrix}\right] \Phi_0(\bdx),
\end{split}\end{equation}
with $r=|\bdx|$.
\end{Theorem}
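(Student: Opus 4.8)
The plan is to compute $K\Phi := \tfrac{d}{d\lambda}\Delta^{i\lambda}\Phi\big|_{\lambda=0}$ as a strong derivative in $\bar\cW$, by differentiating the explicit formula of Theorem~\ref{thm:modularwave} and reading off the Cauchy data (i.e.\ restricting to $x_0=0$); since $\Delta^{i\lambda}=\exp(\iota\lambda\log\Delta)$ on $\bar\cW$, this gives $\log\Delta\,\Phi=-\iota K\Phi$ and the value of $K$. Fix $\Phi\in\cW$ and let $\varepsilon>0$ be as in Theorem~\ref{thm:modularwave}. Since $\tau(0,(0,\bdx))=1$ and $\nu_0=\mathrm{id}$, the function
\[
F(\lambda,\bdx):=(\Delta^{i\lambda}\Phi)_0(\bdx)=\tau(-2\pi\lambda,(0,\bdx))^{-2}\,(c_\lambda+s_\lambda\,\bdx\cdot\bds)\,\Phi\big(\nu_{-2\pi\lambda}(0,\bdx)\big)
\]
is well defined and jointly $C^\infty$ on $(-\varepsilon,\varepsilon)\times\bR^3$, after possibly shrinking $\varepsilon$ so that the relevant points $(0,\bdx)$ stay away from $S_{-2\pi\lambda}$. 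Moreover, writing $\Delta^{i\lambda}\Phi=\Phi^{E_\lambda f}$ with $f\in C^\infty_c(O_r,\bC^2)$, $\Phi^f=\Phi$, as in the proof of Theorem~\ref{thm:modularwave}, the support $\supp(E_\lambda f)=\nu_{2\pi\lambda}(\supp f)$ varies continuously with $\lambda$ and hence stays in a fixed compact set for $|\lambda|<\varepsilon$; together with the support properties of $S$ this shows that $F(\lambda,\cdot)$ is supported in a fixed ball $B_R\subset\bR^3$, uniformly in $|\lambda|<\varepsilon$.

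First I would use this to establish that $\Phi\in\mathrm{Dom}(\log\Delta)$ and that $\log\Delta$ is obtained by differentiating $F$. As $F$ is jointly $C^\infty$ and supported in $(-\varepsilon,\varepsilon)\times B_R$, the mean value theorem bounds the difference quotients $h^{-1}[F(\lambda_0+h,\cdot)-F(\lambda_0,\cdot)]$ by $\big(\sup|\partial_\lambda F|\big)\mathbbm{1}_{B_R}\in L^2(\bR^3,\bC^2)$, uniformly for $\lambda_0,\lambda_0+h$ in a compact subinterval, while they converge pointwise to $\partial_\lambda F(\lambda_0,\cdot)$. By dominated convergence $\lambda\mapsto\Delta^{i\lambda}\Phi$ is $C^1$ into $\bar\cW$ near $0$, so by Stone's theorem $\Phi\in\mathrm{Dom}(\log\Delta)$ and $K\Phi=\tfrac{d}{d\lambda}\Delta^{i\lambda}\Phi\big|_{\lambda=0}$ is the element of $\bar\cW$ with Cauchy data $(K\Phi)_0=\partial_\lambda F(\cdot,\cdot)\big|_{\lambda=0}$. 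Thus it remains only to compute this $\lambda$-derivative.

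Next I would carry out the differentiation. Write $F(\lambda,\bdx)=g(\lambda,(0,\bdx))\,\Phi\big(\nu_{-2\pi\lambda}(0,\bdx)\big)$ with $g(\lambda,x):=\tau(-2\pi\lambda,x)^{-2}(c_\lambda+s_\lambda\undertilde x)$. The preceding Lemma gives $\partial_\lambda g(\lambda,x)\big|_{\lambda=0}=\pi(\undertilde x-4x_0)$, which equals $\pi\,\bdx\cdot\bds$ at $x=(0,\bdx)$, while $g(0,x)=\id$. Combining this with the chain rule and the formula $\partial_\lambda\nu_{-2\pi\lambda}(x)\big|_{\lambda=0}=\pi(1-x_0^2-|\bdx|^2,\,-2x_0\bdx)$ recorded above, whose spatial components vanish at $x_0=0$, and with $\nu_0=\mathrm{id}$, I obtain
\[
(K\Phi)_0(\bdx)=\partial_\lambda F(\lambda,\bdx)\big|_{\lambda=0}=\pi\,\bdx\cdot\bds\,\Phi_0(\bdx)+\pi(1-r^2)\,(\partial_0\Phi)(0,\bdx),\qquad r=|\bdx|.
\]
To remove the time derivative I would use that $\Phi$ solves the Weyl equation~\eqref{eq:weylwave}, so $\partial_0\Phi=-\sigma_j\partial_j\Phi$ and hence $(\partial_0\Phi)(0,\bdx)=-\sigma_j\partial_j\Phi_0(\bdx)=-\bds\cdot\nabla\Phi_0(\bdx)$. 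Substituting gives $(K\Phi)_0(\bdx)=-\pi\big[(1-r^2)\,\bds\cdot\nabla\Phi_0(\bdx)-\bdx\cdot\bds\,\Phi_0(\bdx)\big]$, which is manifestly in $C^\infty_c(\bR^3,\bC^2)$ (so $K\Phi\in\cW$) and is the first line of~\eqref{eq:K}; the matrix form in the second line is the expansion of $\bds\cdot\nabla=\sigma_j\partial_j$ and $\bdx\cdot\bds=x_j\sigma_j$ in terms of the Pauli matrices.

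The only step that is not a direct computation is the first one — showing $\Phi\in\mathrm{Dom}(\log\Delta)$ and that the strong derivative of $\lambda\mapsto\Delta^{i\lambda}\Phi$ coincides with the pointwise $\lambda$-derivative of the Cauchy data. I expect this to be the main (if still routine) obstacle: it rests on the uniform compact support and joint smoothness of $F$, which in turn require a little care with the singular sets $S_{\pm2\pi\lambda}$ and with the $\lambda$-dependence of $\supp(E_\lambda f)$.
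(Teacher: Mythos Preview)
Your proposal is correct and follows essentially the same route as the paper: differentiate the explicit formula of Theorem~\ref{thm:modularwave} pointwise at $\lambda=0$, use the Weyl equation to replace $\partial_0\Phi$ by $-\bds\cdot\nabla\Phi_0$, and upgrade the pointwise derivative to a strong one in $\bar\cW$ via dominated convergence based on uniform compact support of $(\Delta^{i\lambda}\Phi)_0$. The only difference is in how the DCT step is packaged: the paper does it by explicit estimates in light-cone coordinates (Appendix~\ref{app:A}), whereas you invoke joint smoothness of $F$ and the mean value theorem; your caveat about the singular sets is apt, since the \emph{formula} for $F$ is undefined at $|\bdx|=|\coth(\pi\lambda)|$ even though $F=(\Delta^{i\lambda}\Phi)_0$ itself is smooth there, so the joint-$C^\infty$ claim on all of $(-\varepsilon,\varepsilon)\times\bR^3$ should really be argued via $F=\Phi^{E_\lambda f}(0,\cdot)$ rather than via the formula.
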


\begin{proof}
Given $\Phi \in \cW$, we need to check that 
\begin{equation}\label{eq:limK}
\lim_{\lambda \to 0}\left \| K\Phi-\frac{1}{\lambda}(\Delta^{i\lambda}\Phi -\Phi) \right \|^2 = \lim_{\lambda \to 0}\int_{\bR^3} d\bdx\, \left| (K\Phi)_0(\bdx)-\frac{1}{\lambda}\left[(\Delta^{i\lambda}\Phi)_0(\bdx) -\Phi_0(\bdx)\right] \right|^2 = 0.
\end{equation}
To this end, using the previous formulas and~\eqref{eq:modularPsi}, we compute, for fixed $x \in \bR^4 \setminus S_{-2\pi\lambda}$,
\[\begin{split}
\left. \frac d{d\lambda} \Delta^{i\lambda}\Phi(x) \right|_{\lambda = 0} &= \left. \frac{d}{d \lambda} \tau(-2\pi\lambda,x)^{-2} (c_\l +s_\l \undertilde x)\right|_{\lambda=0} \Phi(x) + \partial_\mu \Phi(x) \left.\frac d{d\lambda} \nu_{-2\pi\lambda}(x)_\mu\right|_{\lambda = 0} \\
&= \pi (\undertilde x-4x_0) \Phi(x) +\pi (1-x_0^2-|\bdx|^2) \partial_0 \Phi(x)-2\pi x_0 \bdx \cdot\nabla \Phi(x) \\
&= \pi (\undertilde x-4x_0) \Phi(x) -\pi (1-x_0^2-|\bdx|^2) \bds\cdot\nabla \Phi(x)-2\pi x_0 \bdx \cdot\nabla \Phi(x),
\end{split}\]
where in the last equality we used the Weyl equation~\eqref{eq:weylwave}. Thus for $|\bdx| \neq |\coth(\pi\lambda)|$,
\[
\left. \frac d{d\lambda} (\Delta^{i\lambda}\Phi)_0(\bdx) \right|_{\lambda = 0} = -\pi \big[(1-r^2) \bds \cdot \nabla \Phi_0(\bdx) - \bdx \cdot \bds \Phi_0(\bdx) \big],
\]
and as a consequence, to conclude the proof it is sufficient to show that the support of the integrand in~\eqref{eq:limK} is contained, for sufficiently small $|\lambda|$, in the ball of radius $|\coth(\pi\lambda)|$ centered at the origin, and that it is possible to interchange the limit and the integral. We show in Appendix~\ref{app:A} that this can actually be done by an application of the dominated convergence theorem.
\end{proof} 

\section{Modular group for massless Dirac and Majorana fields}\label{sec:Dirac}
If $\phi$ is a right-handed helicity $1/2$ field, it is immediate to verify that $\chi(x_0, \bdx) := \phi(x_0, -\bdx)$ is a left-handed helicity $1/2$ field, i.e., it satisfies the left-handed Weyl equation
\[
\tilde \partial \chi(x) = (\partial_0 - \sigma_j \partial_j)\chi(x) = 0,
\]
and the anticommutation relations
\[
\{ \chi_\alpha(x), \chi^*_\beta(y)\} = \undertilde{\partial}_{\alpha \beta} D(x-y).
\]
It is then immediate, using the results of the previous sections, to obtain the action of the modular group on $\chi$ and on the corresponding waves' space. We therefore avoid to write down explicit formulas.

If now $\phi$ and $\chi$ are independent right-handed and left-handed helicity $1/2$ fields respectively, with respective one particle spaces $\cH$ and $\cK$ (both copies of $L^2(\bR^3) \oplus L^2(\bR^3)$), it is easy to check that the 4-component field
\begin{equation}\label{eq:Dirac}
\psi(x) := \left[\begin{matrix} \phi(x) \\ \chi(x)\end{matrix}\right],
\end{equation}
acting on the fermionic Fock space over $\cH_D := \cH \oplus \cK$, satisfies the massless Dirac equation
\[
i \gamma^\mu \partial_\mu \psi(x) = 0
\]
with the (chiral) gamma matrices
\begin{equation}\label{eq:gammamat}
\gamma^0 = \left[\begin{matrix} 0 &-1 \\ -1 &0\end{matrix}\right], \qquad \gamma^k = \left[\begin{matrix} 0 &\sigma_k \\ -\sigma_k &0\end{matrix}\right], \quad k=1,2,3,
\end{equation}
and satisfies the anticommutation relations
\[
\{ \psi_\alpha(x),\psi^*_\beta(y)\}=(\gamma^\mu\gamma^0)_{\alpha\beta}\partial_\mu D(x-y), \qquad \alpha,\beta=1,\dots,4.
\]
The Dirac field then defines the twisted local net of von Neumann algebras
\[
\mathscr{F}_D(O) := \{ \psi(f)+\psi(f)^*\,:\, f \in \cS(O,\bC^4)\}'', \quad O \subset \bR^4 \text{ open}.
\]
It is then easy to obtain the description of the one particle space of the field $\psi$ in terms of classical solutions of the massless Dirac equation (Dirac waves), and the action on Dirac waves of the modular hamiltonian of the unit double cone, from the corresponding statements for the components Weyl fields $\phi$ and $\chi$, obtained above. We will therefore limit ourselves to state the main results, refraining from giving the straightforward proofs.

Let then $\cD$ denote the space of $C^\infty(\bR^4,\bC^4)$ solutions $\Psi$ of the massless Dirac equation with Cauchy data $\Psi_0 := \Psi(0,\cdot) \in C^\infty_c(\bR^3,\bC^4)$. Its completion $\bar \cD$ with respect to the Cauchy data $L^2$ norm is a complex Hilbert space with the scalar product
\[
\langle \Psi,\Phi\rangle= \frac1{(2\pi)^3}\int_{\bR^3} \frac{d\bdp}{2|\bdp|}\left[ \hat\Psi_0(\bdp)^\dagger \gamma^0 \slashed{p}_+\hat\Phi_0(\bdp)- \hat\Phi_0(\bdp)^\dagger \gamma^0 \slashed{p}_-\hat\Psi_0(\bdp)\right], \qquad \Psi, \Phi \in \bar \cD,
\]
where for $p \in \bR^4$, $\slashed{p}:=p_\mu \gamma^\mu$, and multiplication by the imaginary unit defined by
\[
(\imath\Psi)\hat{}_0(\bdp) := i\frac{p_k}{|\bdp|}\gamma^0\gamma^k \hat \Psi_0(\bdp).
\]
Using Prop.~\ref{prop:TequivH}, and its counterpart for the left handed Weyl fied, the space of Dirac waves $\bar \cD$ is then naturally identified with the one particle space $\cH_D$.

As a consequence of Thm.~\ref{thm:modham} (and of its left-handed version), the modular hamiltonian then acts on Dirac waves as follows.

\begin{Theorem}\label{thm:modhamDirac}
The one particle modular hamiltonian of the unit double cone for the massless Dirac field on the Dirac wave space $\bar \cD$ is given by $\log \Delta = - \imath  K_D$, where for each $\Psi \in \cD$
\begin{equation}\label{eq:KDirac}
    (K_D\Psi)_0(\bdx) = -\pi \big[(1-r^2) \partial_k - x_k\big] \gamma^0\gamma^k \Psi_0(\bdx).
\end{equation}
\end{Theorem}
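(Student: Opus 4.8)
The plan is to reduce the statement about the 4-component Dirac field to the two separate results already established for the right-handed Weyl field (Theorem~\ref{thm:modham}) and its left-handed counterpart, exploiting the block-diagonal structure~\eqref{eq:Dirac} of $\psi$ and of the chiral gamma matrices~\eqref{eq:gammamat}. Concretely, write a Dirac wave $\Psi \in \cD$ as $\Psi = (\Phi, X)^t$ with $\Phi \in \cW$ a right-handed Weyl wave and $X$ a left-handed Weyl wave; since $\cH_D = \cH \oplus \cK$ and the identification of $\bar\cD$ with $\cH_D$ is, by construction, the direct sum of the identifications of $\bar\cW$ with $\cH$ and of the left-handed wave space with $\cK$, the modular unitary $\Delta^{i\lambda}_{O_1}$ on $\cH_D$ is the direct sum of the corresponding Weyl modular unitaries. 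Hence $\log\Delta$ on $\bar\cD$ is the direct sum of the two Weyl modular Hamiltonians, and it suffices to assemble~\eqref{eq:K} and its left-handed analogue into a single $4\times4$ expression.

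First I would record the left-handed version of Theorem~\ref{thm:modham}: since the left-handed field is obtained from the right-handed one by $\chi(x_0,\bdx) = \phi(x_0,-\bdx)$, the left-handed wave $X$ satisfies $\tilde\partial X = (\partial_0 - \sigma_j\partial_j)X = 0$, and repeating verbatim the computation in the proof of Theorem~\ref{thm:modham} — now using the left-handed Weyl equation $\partial_0 X = \sigma_j\partial_j X$ in place of~\eqref{eq:weylwave} — gives that the left-handed one particle modular Hamiltonian acts on Cauchy data as $-\pi[(1-r^2)\bds\cdot\nabla X_0(\bdx) + \bdx\cdot\bds\, X_0(\bdx)]$, i.e. with a sign flip on both the gradient term and the multiplicative term relative to~\eqref{eq:K}. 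Stacking the two blocks, $(\log\Delta\,\Psi)_0(\bdx)$ equals the column vector whose upper two components are $(\log\Delta_{\mathrm{R}}\Phi)_0(\bdx)$ and whose lower two are $(\log\Delta_{\mathrm{L}}X)_0(\bdx)$; written as a $4\times4$ operator this is $-\pi[(1-r^2)\,\mathrm{diag}(\bds\cdot\nabla, -\bds\cdot\nabla) - \bdx\cdot\mathrm{diag}(\bds,\bds)] \Psi_0(\bdx)$ up to the overall complex structure $-\imath$.

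Next I would identify $\mathrm{diag}(\sigma_k,-\sigma_k) = \gamma^0\gamma^k$. From~\eqref{eq:gammamat}, $\gamma^0\gamma^k = \left[\begin{smallmatrix}0 & -1\\ -1 & 0\end{smallmatrix}\right]\left[\begin{smallmatrix}0 & \sigma_k\\ -\sigma_k & 0\end{smallmatrix}\right] = \left[\begin{smallmatrix}\sigma_k & 0\\ 0 & -\sigma_k\end{smallmatrix}\right]$, exactly the block-diagonal matrix appearing above. Therefore the gradient term becomes $-\pi(1-r^2)\gamma^0\gamma^k\partial_k\Psi_0$. For the multiplicative term, one must check that the $\bds$-block structure coming from the complex structures also collapses correctly: the complex structure on $\bar\cD$ is $(\imath\Psi)\hat{}_0 = i(p_k/|\bdp|)\gamma^0\gamma^k\hat\Psi_0$, which is precisely $\mathrm{diag}(\imath_{\mathrm{R}}, \imath_{\mathrm{L}})$ in Fourier space once one notes that for the left-handed block the relevant complex structure carries the opposite sign of $\bds\cdot\bdp$, matching the $-\sigma_k$ in $\gamma^0\gamma^k$. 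Combining $\log\Delta = -\imath K_D$ with the left-hand side computed above, the two minus signs on the left-handed block (one from the Weyl Hamiltonian, one from the complex structure) cancel, so that $K_D$ acquires a uniform $\gamma^0\gamma^k$ in both terms, yielding $(K_D\Psi)_0(\bdx) = -\pi[(1-r^2)\partial_k - x_k]\gamma^0\gamma^k\Psi_0(\bdx)$ as claimed.

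The only genuine subtlety — and the step I would treat with care rather than dismiss — is the bookkeeping of signs in the left-handed block: both the Weyl equation and the complex structure change sign under $\bdx \to -\bdx$ (equivalently $\sigma_j \to -\sigma_j$), and one must verify that these sign changes conspire so that $\gamma^0\gamma^k$, and not $\gamma^0\gamma^k$ with a stray sign on the lower block, appears in the final formula; the cleanest way to do this is to carry $\chi(x_0,\bdx) = \phi(x_0,-\bdx)$ through Theorem~\ref{thm:modham} explicitly for the left-handed Cauchy data and then verify directly against~\eqref{eq:gammamat}. Everything else — smoothness of $\Delta^{i\lambda}\Psi$ for small $|\lambda|$, the domain issue, and the interchange of limit and integral — is inherited termwise from the Weyl case and in particular from the dominated convergence argument of Appendix~\ref{app:A}.
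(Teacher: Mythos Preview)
Your overall approach is exactly the paper's: the Dirac statement is deduced blockwise from Theorem~\ref{thm:modham} and its left-handed analogue, assembled via $\gamma^0\gamma^k=\mathrm{diag}(\sigma_k,-\sigma_k)$. The paper in fact gives no detailed proof, merely asserting that the result follows ``as a consequence of Thm.~\ref{thm:modham} (and of its left-handed version)'', so your filling-in is appropriate.

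That said, the sign bookkeeping you flag as ``the only genuine subtlety'' is actually mishandled in your sketch, and the error is masked by the vague ``two minus signs cancel'' appeal to the complex structure. The point you miss is that in passing to the left-handed field via $\chi(x_0,\bdx)=\phi(x_0,-\bdx)$ \emph{two} things change simultaneously in Theorem~\ref{thm:modularwave}: the Weyl equation becomes $\partial_0 X=+\bds\cdot\nabla X$ \emph{and} the matrix $\undertilde{x}$ in~\eqref{eq:modularPsi} becomes $\tilde{x}$ (since $\undertilde{(x_0,-\bdx)}=\tilde{(x_0,\bdx)}$, while $\tau$ and $\nu_\lambda$ are even in $\bdx$). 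Running the derivative at $\lambda=0$ with both changes gives at $x_0=0$
\[
(K_LX)_0(\bdx)=+\pi\big[(1-r^2)\bds\cdot\nabla-\bdx\cdot\bds\big]X_0(\bdx)=-\,(K_R X)_0(\bdx),
\]
i.e., \emph{both} terms flip sign, not just one; your displayed left-handed formula $-\pi[(1-r^2)\bds\cdot\nabla X_0+\bdx\cdot\bds X_0]$ is incorrect, and your intermediate $4\times4$ expression should have $\bdx\cdot\mathrm{diag}(\bds,-\bds)$, not $\bdx\cdot\mathrm{diag}(\bds,\bds)$. With the correct $K_L=-K_R$, the assembly $K_D=\mathrm{diag}(K_R,K_L)=\mathrm{diag}(K_R,-K_R)$ is \emph{already} $-\pi[(1-r^2)\partial_k-x_k]\gamma^0\gamma^k$ in both terms, with no need to invoke the complex structure to repair the multiplicative piece. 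The complex structure $\imath_D=\mathrm{diag}(\imath_R,\imath_L)$ simply direct-sums compatibly with $\log\Delta_D=\mathrm{diag}(\log\Delta_R,\log\Delta_L)$, so $\log\Delta_D=-\imath_D K_D$ holds blockwise.
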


It is also worthwhile to consider a massless Majorana field $\psi$, i.e., a massless Dirac field satisfying the reality condition
\[
\gamma^0 C \psi^*(x) = \psi(x), \qquad C := \left[\begin{matrix}- \sigma_2 &0 \\ 0 &\sigma_2\end{matrix}\right].
\]
As such, it can be obtained as the particular case of~\eqref{eq:Dirac} where $\cK = \cH$ and $\chi =  \sigma_2 \phi^*$. This implies the identity
\begin{equation}\label{eq:psiphi}
\psi(f)+\psi(f)^* = \phi(g + \sigma_2 \bar h) + \phi(g+\sigma_2 \bar h)^*, \qquad f = \left[\begin{matrix} g \\ h\end{matrix}\right], \; g, h \in C^\infty_c(\bR^4,\bC^2),
\end{equation}
showing that the local von Neumann algebras, and then the modular groups, of the massless Majorana and of the helicity 1/2 fields coincide; i.e., the massless Majorana field is just a 4-component reformulation of the Weyl field.

The corresponding one particle Hilbert space $\bar \cW$ can of course also be written in the 4-component formalism as
\[
\bar \cM := \{ \Psi \in \bar \cD\,:\, \Psi(x) = \gamma^0 C \overline{\Psi(x)} \;\forall x \in \bR^4\},
\]
which, thanks to the identity $C \gamma^0 \bar\gamma^k = \gamma^k \gamma^0 C$, is a complex closed subspace of $\bar \cD$ (i.e., $\Psi \in \bar \cM$ implies $\imath \Psi \in \bar \cM$), and it is also invariant under the operator $K_D$. Indeed, one easily verifies, using the identity~\eqref{eq:sigma2p}, that the unitary map identifying $\bar\cW$ with $\bar \cM$ is
\begin{equation}\label{eq:TequivSM}
\Phi \in \bar \cW \mapsto \frac1{\sqrt{2}} \left[ \begin{matrix} \Phi \\ \sigma_2 \bar \Phi\end{matrix}\right] \in \bar\cM,
\end{equation}
and it is immediate to see that it intertwines the action of $K$ in~\eqref{eq:K} with the restriction of that of $K_D$ in~\eqref{eq:KDirac} to $\bar\cM$. The latter provides then the modular hamiltonian of the unit double cone for the massless Majorana field.

As an application of our formula for the local modular hamiltonian of the Majorana field, the relative entropy of suitable one particle states with respect to the vacuum on the local algebra $\mathscr{F}(O_1)$ can be computed. To this end, if $\Psi \in \bar \cM$, we denote by $\omega_\Psi = \|\Psi\|^{-2}\langle \Psi, (\cdot)\Psi\rangle$ the corresponding vector state on the algebra of bounded operators on the antisymmetric Fock space on $\bar \cM$. In particular $\omega = \langle \Omega, (\cdot)\Omega\rangle$ is the vacuum state. We also denote by $\cM$ the set of smooth elements $\Psi \in \bar\cM$.

\begin{Proposition}\label{prop:relentropy}
Let $\Psi \in \cM$ be such that $\operatorname{supp} \Psi_0 \subset B$ and $\|\Psi\| = 1$. The relative entropy of $\omega_\Psi$ with respect to $\omega$ on $\mathscr{F}(O_1)$ is given by
\[
S(\omega_\Psi \| \omega) = \frac1{4\pi^2} \int \frac{d\bdp}{|\bdp|} \big[\hat\Psi_0(\bdp)^\dagger |\bdp|^2(1+\nabla^2) \hat\Psi_0(\bdp) + \hat\Psi_0(\bdp)^\dagger \bdp \cdot \nabla \hat\Psi_0(\bdp)+ i  p_j \hat\Psi_0(\bdp)^\dagger \sigma^{jk} \partial_k \hat\Psi_0(\bdp)\big],
\]
where $\sigma^{jk} = \frac i2[\gamma^j,\gamma^k]$.
\end{Proposition}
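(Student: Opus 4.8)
The plan is to reduce $S(\omega_\Psi\|\omega)$ to the one‑particle modular Hamiltonian of $O_1$ found in Thm.~\ref{thm:modhamDirac}, and then to evaluate the resulting quadratic form by passing to Fourier transforms of the Cauchy data. For the reduction: since $\operatorname{supp}\Psi_0\subset B$, the Majorana wave $\Psi$, being a solution of the Dirac equation, is supported in the domain of dependence of $B$, which is exactly $O_1$; hence the corresponding one‑particle vector lies in the standard (real) subspace $H(O_1)\subset\bar\cM$ of $(\mathscr{F}(O_1),\Omega)$, and moreover $\Psi\in\cM\subset\operatorname{dom}(\log\Delta)$ with $\log\Delta\,\Psi=-\imath K_D\Psi$ by Thm.~\ref{thm:modhamDirac} together with the identification of $\bar\cM$ above. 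By the general formula for the Araki relative entropy of a localised one‑particle excitation of a quasi‑free state on a CAR net (see~\cite{GMV}, and~\cite{LM} for the scalar analogue; no projection onto $H(O_1)$ being needed since $\Psi\in H(O_1)$), $S(\omega_\Psi\|\omega)$ is a universal multiple of $-\langle\Psi,\log\Delta\,\Psi\rangle=\langle\Psi,\imath K_D\Psi\rangle$, so it only remains to evaluate this scalar product.

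Next I would put $\imath K_D\Psi$ into momentum‑space form. Applying $K_D$ as in~\eqref{eq:KDirac} and Fourier transforming the Cauchy data turns $\partial_k\mapsto i p_k$, $x_k\mapsto i\partial_{p_k}$, $r^2\mapsto-\nabla_{\bdp}^2$, so that $\widehat{(K_D\Psi)_0}=-i\pi\,\gamma^0\gamma^k\big[p_k(1+\nabla^2)\hat\Psi_0+\partial_{p_k}\hat\Psi_0\big]$; composing with the bounded multiplier $i\,p_j|\bdp|^{-1}\gamma^0\gamma^j$ of $\imath$ and using $\gamma^0\gamma^j\gamma^0\gamma^k=-\gamma^j\gamma^k$, $\gamma^j\gamma^k=-\delta^{jk}\id-i\sigma^{jk}$ and $p_j\sigma^{jk}p_k=0$, one obtains
\[
(\imath K_D\Psi)\hat{}_0(\bdp)=\frac{\pi}{|\bdp|}\big[|\bdp|^2(1+\nabla^2)+\bdp\cdot\nabla+i\,p_j\sigma^{jk}\partial_k\big]\hat\Psi_0(\bdp),
\]
i.e.\ exactly $\pi|\bdp|^{-1}$ times the differential operator appearing in the statement, applied to $\hat\Psi_0$.

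It then remains to insert this into $\langle\Psi,\imath K_D\Psi\rangle$ through the momentum‑space expression for the scalar product of $\bar\cM$: the two summands of the latter combine (the part skew under interchange of $\hat\Psi_0$ and its image being killed by the skew‑adjointness of $K_D$, which follows from $\log\Delta=-\imath K_D$ being self‑adjoint and commuting with $\imath$), and the Majorana reality condition $\Psi_0=\gamma^0 C\,\bar\Psi_0$ is used to show that the resulting momentum‑space integral is real. Plancherel $\|\Psi\|^2=(2\pi)^{-3}\int d\bdp\,|\hat\Psi_0|^2$, the factor $-\pi$ from $K_D$, and the universal constant of the CAR entropy formula then assemble the prefactor $\tfrac1{4\pi^2}$, yielding the claimed expression.

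The technical core, which I expect to be the main obstacle, is the last two steps: carrying the non‑commuting $4\times4$ Dirac matrices through the Fourier transform and through the integrations by parts in $\bdp$ that move derivatives off $\hat\Psi_0^\dagger$, and invoking the reality condition at the right place so that the many terms generated by $(1-r^2)\partial_k-x_k$ and by the two summands of the $\bar\cM$ scalar product collapse to precisely $|\bdp|^2(1+\nabla^2)+\bdp\cdot\nabla+i\,p_j\sigma^{jk}\partial_k$ and the integral becomes real. Some care is also needed because the symbol $p_j|\bdp|^{-1}\gamma^0\gamma^j$ of $\imath$ is only Lipschitz at $\bdp=0$, so the integrations by parts should be justified (e.g.\ by first treating $\Psi$ whose Cauchy data has Fourier transform vanishing near the origin and then passing to the limit), and one should verify the hypotheses ($\Psi\in H(O_1)\cap\operatorname{dom}\log\Delta$, $\|\Psi\|=1$) of the CAR relative‑entropy formula.
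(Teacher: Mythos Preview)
Your reduction of $S(\omega_\Psi\|\omega)$ to a one-particle quantity is the weak point. You invoke a ``general formula'' from~\cite{GMV} giving the entropy as ``a universal multiple of $-\langle\Psi,\log\Delta\,\Psi\rangle$'', but you neither specify the formula nor compute the multiple, and for fermionic one-particle excitations there is no direct analogue of the bosonic coherent-state entropy formula. The paper does not appeal to any such general result: instead it observes that, because of the Majorana condition and the CARs, the self-adjoint operator $B(f):=2^{-1/2}(\psi(f)+\psi(f)^*)\in\mathscr{F}(O_1)$ satisfies $B(f)^2=\id$ (using $\|\Psi\|=1$), hence is a \emph{unitary} with $B(f)\Omega=\Psi$. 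Then $\Delta_{\Psi,\Omega}^{i\lambda}=B(f)\Delta^{i\lambda}B(f)^*$ by~\cite[Lemma~5.7]{CGP}, and differentiating $\langle\Omega,B(f)\Delta^{i\lambda}B(f)^*\Omega\rangle$ at $\lambda=0$ gives $S(\omega_\Psi\|\omega)=-2\,\Im\langle\Psi,K_D\Psi\rangle$ with the exact constant. This unitarity of $B(f)$ is what makes the problem tractable and is specific to the Majorana/CAR situation; your abstract appeal misses it.

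Your momentum-space computation of $(\imath K_D\Psi)\hat{}_{\,0}$ is correct and matches what the paper obtains inside its Lemma~\ref{lem:relentropy}. However, pairing this against $\hat\Psi_0$ is not immediate, since the $\bar\cM$ scalar product involves $\gamma^0\slashed{p}_\pm$, not the identity. The paper avoids your route through $\imath$: it writes $\Im\langle\Psi,K_D\Psi\rangle$ directly, uses $\slashed{p}_++\slashed{p}_-=2p_k\gamma^k$ to collapse the two summands of the scalar product, and then performs the integrations by parts in $\bdp$ (including the terms generated by differentiating $|\bdp|^{-1}$) to reach the stated expression. That is exactly the ``technical core'' you flag, and it is done in the paper's Appendix; the Majorana reality condition is not needed there.
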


\begin{proof}
Analogously to the proof of Thm.~\ref{thm:modularwave}, from $\operatorname{supp} \Psi_0 \subset B$ it follows that we can find $f \in C_c^\infty(O_1,\bC^4)$ such that
\[
\Psi(x) = \Psi^f(x) := \int_{\bR^4} dy \,\gamma^\mu \gamma^0 \partial_\mu D(x-y) \overline{f(y)}.
\]
Moreover, one computes easily that $\gamma^0 C \overline{\Psi^f} = \Psi^{-\gamma^0 C \bar f}$, so that, since $\Psi = \Psi^f \in \bar \cM$, we can also choose $f$ such that $\gamma^0 C \bar f = -f$. This in turn entails
\[
f = \left[ \begin{matrix} g \\ -\sigma_2 \bar g\end{matrix}\right] \quad \text{and}\quad \Psi^f = \left[ \begin{matrix} \Phi^g \\ \sigma_2 \overline{\Phi^g}\end{matrix}\right] 
\]
with $g \in C_c^\infty(O_1,\bC^2)$, so that, thanks to~\eqref{eq:psiphi}, to Prop.~\ref{prop:TequivH} and to the identification~\eqref{eq:TequivSM},
\[
(\psi(f)+\psi(f)^*)\Omega = 2(\phi(g)+\phi(g)^*)\Omega \cong 2\Phi^g \cong \sqrt{2}\Psi^f.
\]
Moreover, from Eq.~\eqref{eq_72} and the CARs, one obtains
\[\begin{split}
\{\psi(f)+\psi(f)^*, \psi(f)+\psi(f^*)\} &= 4\{\phi(g)+\phi(g)^*,\phi(g)+\phi(g)^*\} = 8 \{a(l_g,h_g)^*,a(l_g,h_g)\}\\
&= 8\|(l_g ,h_g)\|^2 = 8 \|\Phi^g\|^2 = 4 \|\Psi^f\|^2 = 4,
\end{split}\]
i.e., the operator $B(f) :=2^{-1/2} (\psi(f)+\psi(f)^*) \in \mathscr{F}(O_1)$ is unitary. As a consequence, the unit vector $\Psi = B(f)\Omega$ is cyclic and separating for $\mathscr{F}(O_1)$, and for the one parameter group generated by the relative modular operator $\Delta_{\Psi, \Omega}$ one has~\cite[Lemma 5.7]{CGP}
\[
\Delta_{\Psi, \Omega}^{i\lambda} = B(f) \Delta^{i\lambda} B(f)^*, \qquad \lambda \in \bR.
\]
Therefore the relative entropy between $\omega_\Psi$ and $\omega$ on $\mathscr{F}(O_1)$ is given by
\[\begin{split}
S(\omega_\Psi \| \omega) &= -\langle \Omega, \log \Delta_{\Psi,\Omega}\Omega\rangle = i \frac d{d\lambda} \langle \Omega, B(f) \Delta^{i\lambda}B(f)^*\Omega\rangle\Big|_{\lambda = 0} \\
&= 2i \langle \Psi, K_D\Psi\rangle = -2 \Im \langle \Psi, K_D\Psi\rangle,
\end{split}\]
where the last equality is due to the skew-selfadjointness of $K_D$. The statement is then obtained by Lemma~\ref{lem:relentropy}.
\end{proof}

The formula for the relative entropy just obtained can be recast in a more familiar form by recalling that the quantum energy density of the Majorana field is expressed by
\begin{equation}\label{eq:T00}
T_{00}(x) = \frac i 2: \psi^\dagger(x) \partial_0 \psi(x) - \partial_0 \psi^\dagger(x) \psi (x) : = \frac i 2:\partial_k \psi^\dagger(x) \gamma^0 \gamma^k \psi(x) - \psi^\dagger(x) \gamma^0 \gamma^k \partial_k \psi(x) : 
\end{equation}
as a quadratic form on a domain which can be taken to be the linear span in $\Gamma_-(\bar\cM)$ of $\bigcup_{n=0}^{+\infty} \cM^{\otimes_a n}$ (antisymmetric tensor powers).
 
\begin{Theorem}
Let $\Psi \in \cM$ be such that $\operatorname{supp} \Psi_0 \subset B$ and $\|\Psi\| =1$. Then 
\[
S(\omega_\Psi \| \omega) =\frac1{4\pi^2} \int_{\bR^3} d\bdx\,(1-r^2) \langle \Psi, T_{00}(0,\bdx) \Psi\rangle.
\]
\end{Theorem}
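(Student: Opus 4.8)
The plan is to reduce the statement to Proposition~\ref{prop:relentropy}, which already evaluates $S(\omega_\Psi\|\omega)$ both as $2i\langle\Psi,K_D\Psi\rangle$ and as the explicit momentum-space integral displayed there; it therefore suffices to compute $\int_{\bR^3}d\bdx\,(1-r^2)\langle\Psi, T_{00}(0,\bdx)\Psi\rangle$ and to check that it coincides with that integral. (This is the expected form of the identity, since $\int(1-r^2)T_{00}$ is the Noether charge on $\{x_0=0\}$ of the conformal Killing field generating the modular flow $\nu_\lambda$.) The first, and main, task is thus to evaluate the one-particle expectation value $\langle\Psi, T_{00}(0,\bdx)\Psi\rangle$ for $\Psi\in\cM$.

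Here I would argue as follows. Since $T_{00}$ in \eqref{eq:T00} is a Wick-ordered quadratic expression in $\psi$, only its number-conserving part contributes to the expectation in the one-particle vector $\Psi$ (the terms with two annihilation or two creation operators do not contribute). Writing $\psi=\psi^{(+)}+\psi^{(-)}$ for the annihilation and creation parts and using the Majorana reality relation $\psi^{(-)}=\gamma^0 C(\psi^{(+)})^*$, one finds $\langle\Psi, T_{00}(0,\bdx)\Psi\rangle=T_{00}^{\mathrm{cl}}[\Psi_0^{(+)}](\bdx)-T_{00}^{\mathrm{cl}}[\Psi_0^{(-)}](\bdx)$, where $T_{00}^{\mathrm{cl}}[\chi]:=\tfrac i2\big(\partial_k\chi^\dagger\gamma^0\gamma^k\chi-\chi^\dagger\gamma^0\gamma^k\partial_k\chi\big)$ is the classical energy density and $\Psi_0^{(\pm)}$ are the positive/negative frequency parts of the Cauchy datum, characterised by $A\Psi_0=\Psi_0^{(+)}-\Psi_0^{(-)}$ with $A$ the Fourier multiplier $\hat\Psi_0(\bdp)\mapsto\frac{p_k}{|\bdp|}\gamma^0\gamma^k\hat\Psi_0(\bdp)$. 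The identity $C\gamma^0\bar\gamma^k=\gamma^k\gamma^0 C$ yields $T_{00}^{\mathrm{cl}}[\gamma^0 C\bar\chi]=-T_{00}^{\mathrm{cl}}[\chi]$, so that, using $\Psi_0^{(-)}=\gamma^0 C\overline{\Psi_0^{(+)}}$, in fact $\langle\Psi, T_{00}(0,\bdx)\Psi\rangle=2\,T_{00}^{\mathrm{cl}}[\Psi_0^{(+)}](\bdx)=\Re\big[\tfrac i2\big(\partial_k\Psi_0^\dagger\gamma^0\gamma^k A\Psi_0-\Psi_0^\dagger\gamma^0\gamma^k\partial_k A\Psi_0\big)(\bdx)\big]$. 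As a normalisation check, integrating this over $\bR^3$ (smearing function $\equiv 1$) must reproduce the expectation $\langle\Psi, H\Psi\rangle$ of the free Hamiltonian in $\Psi$.

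Next I would smear against $1-r^2$: integrate over $\bR^3$, integrate by parts (legitimate since $\supp\Psi_0\subset B$, so that no boundary terms occur), and pass to momentum space via $\widehat{\partial_k f}=ip_k\hat f$ and $\widehat{x_k f}=i\partial_{p_k}\hat f$. The weight $1-r^2$ becomes $1+\nabla_p^2$, while the $x_k$-terms generated by the integration by parts become $\partial_{p_k}$. The algebraic identity $\gamma^0\gamma^k p_k\,A(\bdp)=|\bdp|$ produces the leading term $\tfrac1{|\bdp|}\hat\Psi_0^\dagger|\bdp|^2(1+\nabla^2)\hat\Psi_0$, and the Leibniz rule for $\nabla_p$ acting on the spinor-valued symbol $A(\bdp)$, once its symmetric part is separated from its antisymmetric (commutator, hence $\sigma^{jk}$) part, produces the remaining $\hat\Psi_0^\dagger\,\bdp\cdot\nabla\hat\Psi_0$ and $ip_j\hat\Psi_0^\dagger\sigma^{jk}\partial_k\hat\Psi_0$ terms. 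Matching with the integral in Proposition~\ref{prop:relentropy}, taking into account the Plancherel normalisation and the constant $\frac1{4\pi^2}$, then gives the statement.

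The main obstacle is the second step. The one-particle expectation of $:T_{00}:$ is genuinely not the ``classical density'' $T_{00}^{\mathrm{cl}}[\Psi_0]$ one might naively write (that would integrate to $0$ by the Majorana reality condition): it is the classical density of the positive-frequency part of $\Psi$, doubled, and this must be re-expressed through the on-shell projection $A(\bdp)$ before it can be compared with the formula of Proposition~\ref{prop:relentropy}. One must also keep straight the two-component--to--four-component passage of Section~\ref{sec:Dirac}, with its attendant factors of $2$ and of $(2\pi)^3$, and justify the manipulations of $:T_{00}:$ — which is only a quadratic form — on the smooth compactly supported datum $\Psi_0$. By comparison the integration-by-parts and Fourier step of the preceding paragraph is routine, the one delicate point being that it must reproduce the spin contribution $ip_j\sigma^{jk}\partial_k$.
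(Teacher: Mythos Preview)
Your strategy is the same as the paper's: both reduce to Proposition~\ref{prop:relentropy} by computing $\langle\Psi,T_{00}(0,\bdx)\Psi\rangle$ explicitly and then smearing against $1-r^2$. The differences are purely organizational. For the one-particle expectation the paper works through the explicit mode expansion of $\psi$ (Lemma~\ref{lem:T00}) and arrives at a double momentum integral, which, once you identify the paper's $u(\bdx)$ and $v(\bdx)$ with $(2\pi)^{3/2}\Psi_0^{(-)}(\bdx)$ and $-(2\pi)^{3/2}\Psi_0^{(+)}(\bdx)$, is exactly your $T_{00}^{\mathrm{cl}}[\Psi_0^{(+)}]-T_{00}^{\mathrm{cl}}[\Psi_0^{(-)}]$ (up to the overall factor of $2$ coming from $\Psi=\sqrt{2}\,a^*(l_{1/2},l_{-1/2})\Omega$, which you flag but do not track). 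Your polarization identity $B[\Psi_0^{(+)},\Psi_0^{(+)}]-B[\Psi_0^{(-)},\Psi_0^{(-)}]=\Re B[\Psi_0,A\Psi_0]$ is a clean repackaging of this. For the smearing step, the paper passes to Fourier space first and evaluates $(2\pi)^3(1+\nabla_{\bdp}^2)\hat t(\boldsymbol{0})$, whereas you propose to integrate by parts in $\bdx$ and then Fourier transform; these are equivalent, though the paper's route keeps the Leibniz bookkeeping (in particular the emergence of the $i p_j\sigma^{jk}\partial_k$ term) slightly more transparent because differentiation in $\bdp$ acts on an already explicit integrand rather than on the nonlocal symbol $A(\bdp)$.
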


\begin{proof}
Using the notation $t(\bdx) := \langle \Psi, T_{00}(0,\bdx) \Psi\rangle$ one gets, by Lemma~\ref{lem:T00},
\[\begin{split}
\hat t(\bdp) = \frac1{2(2\pi)^3} \int d\bdq &\bigg\{\big(|\bdq|+|\bdq+\bdp|\big) \hat \Psi_0(\bdq)^\dagger \hat \Psi_0(\bdq+\bdp)\\
&\quad+ \left( \frac 1{|\bdq|}+\frac1{|\bdq+\bdp|}\right) \hat\Psi_0(\bdq)^\dagger \big[\bdq\cdot(\bdq+\bdp)+i (\bdq \wedge \bdp)\cdot \boldsymbol{\Sigma}\big]\hat\Psi_0(\bdq+\bdp)\bigg\},
\end{split}\]
and therefore, by elementary computations and from Prop.~\ref{prop:relentropy},
\[
\int_{\bR^3} d\bdx\, (1-r^2) t(\bdx) =(2\pi)^3 (1+ \nabla^2) \hat t(\boldsymbol{0}) = 4\pi^2S(\omega_\Psi \| \omega),
\]
which proves the statement.
\end{proof}

The above result is coherent with the fact that $\log \Delta = \frac12( H-C_0)$, with $H$ the Hamiltonian and $C_0$ the time component of the generator of special conformal trasformations~\cite[Thm.\ 4.8]{Hi}, which is expected to be related to the energy density by
\[
C_0 = \int_{\bR^3} d\bdx \, r^2 T_{00}(0,\bdx)
\]
(see, e.g.,~\cite[Eq. (2.77)]{Gi}), by the classical Noether theorem and the tracelessness of the massless Majorana field energy-momentum tensor. However, to the best of our knowledge, no proof of this formula in quantum field theory is available in the literature.

We can also obtain the relative entropy between the restrictions of $\omega_\Psi$ and $\omega$ to the observable von Neumann algebra $\mathscr{A}(O_1)$ of the massless Majorana field. This is defined as the fixed point subalgebra of $\mathscr{F}(O_1)$ with respect to the involutive automorphism $\beta : \mathscr{F}(O_1) \to \mathscr{F}(O_1)$ such that
\[
\beta(\psi(f)) = -\psi(f), \qquad f \in C^\infty_c(\bR^4,\bC^4),
\]
i.e., $\beta = \operatorname{Ad} U(p(-1,0))$.

\begin{Corollary}
Let $\Psi \in \cM$ be such that $\operatorname{supp} \Psi_0 \subset B$ and $\|\Psi\| =\frac1{ \sqrt{2}}$. Then
\[
S\big(\omega_\Psi |_{\mathscr{A}(O_1)}\,\big\| \,\omega|_{\mathscr{A}(O_1)}\big) = S(\omega_\Psi \| \omega).
\]
\end{Corollary}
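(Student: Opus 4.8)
The plan is to realize the state $\omega_\Psi$ on $\mathscr{F}(O_1)$ as $\omega\circ\operatorname{Ad}W$ for a self-adjoint unitary $W\in\mathscr{F}(O_1)$ which is \emph{odd} under the grading automorphism $\beta$, and then to combine this with the $\beta$-invariant conditional expectation $E=\tfrac12(\mathrm{id}+\beta)\colon\mathscr{F}(O_1)\to\mathscr{A}(O_1)$ and the monotonicity of the relative entropy.

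First I would redo the opening part of the proof of Prop.~\ref{prop:relentropy}: since $\operatorname{supp}\Psi_0\subset B$ one finds $f\in C_c^\infty(O_1,\bC^4)$ with $\gamma^0 C\bar f=-f$ and $\Psi=\Psi^f$, and one sets $W:=\psi(f)+\psi(f)^*\in\mathscr{F}(O_1)$. The same anticommutator computation as there, but now with $\|\Psi\|^2=\tfrac12$, yields $\{W,W\}=4\|\Psi^f\|^2=2$, i.e.\ $W^*=W$ and $W^2=\id$; moreover $W\Omega=\sqrt2\,\Psi$ is a unit vector and $\beta(W)=-W$. Hence, for all $X\in\mathscr{F}(O_1)$,
\[
\omega_\Psi(X)=\langle W\Omega,X\,W\Omega\rangle=\omega(WXW)=(\omega\circ\operatorname{Ad}W)(X).
\]

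Next I would record two facts. Since $\beta=\operatorname{Ad}U(p(-1,0))$ is normal and fixes $\Omega$, we have $\omega\circ\beta=\omega$, so $E=\tfrac12(\mathrm{id}+\beta)$ is a normal conditional expectation of $\mathscr{F}(O_1)$ onto $\mathscr{A}(O_1)$ with $\omega\circ E=\omega$. And $\omega_\Psi$ is $\beta$-invariant as well: from $\beta(W)=-W$ one gets $\beta(WXW)=\beta(W)\beta(X)\beta(W)=W\beta(X)W$, whence $\omega_\Psi(\beta(X))=\omega(W\beta(X)W)=\omega(\beta(WXW))=\omega(WXW)=\omega_\Psi(X)$. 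Consequently $\omega_\Psi\circ E=\omega_\Psi$, i.e.\ both $\omega$ and $\omega_\Psi$ equal $(\,\cdot\,|_{\mathscr{A}(O_1)})\circ E$ on $\mathscr{F}(O_1)$.

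The statement then follows by applying monotonicity of Araki's relative entropy under normal unital completely positive maps twice. Applied to the inclusion $\mathscr{A}(O_1)\hookrightarrow\mathscr{F}(O_1)$ it gives $S(\omega_\Psi|_{\mathscr{A}(O_1)}\|\,\omega|_{\mathscr{A}(O_1)})\le S(\omega_\Psi\|\omega)$; applied to the conditional expectation $E$, and using $\omega_\Psi|_{\mathscr{A}(O_1)}\circ E=\omega_\Psi$ and $\omega|_{\mathscr{A}(O_1)}\circ E=\omega$, it gives the reverse inequality. The only point requiring care is the very first step — choosing $f$ so that $W=\psi(f)+\psi(f)^*$ is simultaneously a self-adjoint unitary (which is exactly what the normalization $\|\Psi\|=1/\sqrt2$ ensures) and odd under $\beta$; but this is precisely the reality-condition bookkeeping already carried out in the proof of Prop.~\ref{prop:relentropy}, so no genuinely new difficulty arises.
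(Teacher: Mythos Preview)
Your proof is correct and follows essentially the same route as the paper: both arguments hinge on the conditional expectation $E=\tfrac12(\mathrm{id}+\beta)$ and the $\beta$-invariance of $\omega_\Psi$ (which you derive from $\beta(W)=-W$, equivalently the paper's $U(p(-1,0))\Psi=-\Psi$). The only difference is that the paper concludes by citing \cite[Thm.\ 5.15]{OP} directly, whereas you spell out its proof via the two monotonicity inequalities; your version is thus slightly more self-contained but not a genuinely different approach.
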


\begin{proof}
We denote by $E : \mathscr{F}(O_1) \to \mathscr{A}(O_1)$ the normal conditional expectation
\[
E(\psi) := \frac12(\psi + \beta(\psi)), \qquad \psi \in \mathscr{F}(O_1).
\]
Since $\omega \circ E = \omega$ and $\omega$ is a faithful state on $\mathscr{F}(O_1)$, $E$ is faithful too.  Moreover from
\[
U(p(-1,0))\Psi = U(p(-1,0))B(f)\Omega = - B(f)\Omega = -\Psi
\]
one gets $\omega_\Psi(\beta(\psi)) = \omega_\Psi(\psi)$ for all $\psi \in \mathscr{F}(O_1)$, i.e., $\omega_\Psi \circ E = \omega_\Psi$. The statement then follows from~\cite[Thm.\ 5.15]{OP}.
\end{proof}

An obvious extension of the above results would be to compute the relative entropy between the vacuum and a one particle state not localized in $B$. While it is fairly easy to verify that if such a state is localized in the complement of $B$ the relative entropy vanishes (e.g., since the relative modular operator coincides with $\Delta$), already for states which are sums of a state localized in $B$ and one in its complement the problem becomes much more difficult. We plan to come back to these issues in the future.

\section*{Acknowledgements}
We are pleased to thank R.\ Longo for his constant interest in this work, and for useful discussions. F.L.P. was partially supported by the Research Council of Norway (project 324944) and by the Trond Mohn Foundation (project 101510001). G.M. was partially supported by INdAM-GNAMPA, INdAM-GNAMPA project \emph{Operator algebras and infinite quatum systems} CUP E53C23001670001,  University of Rome Tor Vergata funding \emph{OAQM} CUP E83C22001800005, ERC Advanced Grant 669240 \emph{QUEST} and the MIUR Excellence Department Project \emph{MatMod@TOV} awarded to
the Department of Mathematics, University of Rome Tor Vergata, CUP E83C23000330006.

\section*{Declarations}
\textbf{Data availability.} Data sharing is not applicable to this article as no datasets were generated or analysed during the current study.
\smallskip

\noindent\textbf{Conflict of interest.} The authors have no competing interests to declare that are relevant to the content of this article. There are no financial and non-financial conflicts of interest.

\appendix

\section{Estimates for Thm.~\ref{thm:modham}}\label{app:A}
We show that it is possible to apply the dominated convergence theorem to interchange the limit and integral in~\eqref{eq:limK} in the proof of Thm.~\ref{thm:modham}. 

To this end, let $R > 0$ be such that $\supp \Phi_0 \subset B_R$.  It is then easy to see that there is an $\varepsilon > 0$ such that for $|\lambda| < \varepsilon$ there holds $\supp (\Delta^{i\lambda}\Phi)_0 \subset B_{2R}$. Indeed, in view of~\eqref{eq:modularPsi}, $\supp (\Delta^{i\lambda}\Phi)_0 = \{ \bdx \in \bR^3\,:\, \nu_{-2\pi\lambda}(0,\bdx) \in \supp \Phi\}$ and one has, thanks to Huyghens' principle, that $\supp\Phi$ is defined, in the light cone spherical coordinates introduced in the proof of Lemma~\ref{lem:jacobian}, by the inequalities $v \leq u$ and $-R \leq v \leq R$, or $-R \leq u \leq R$. On the other hand, still in these coordinates, one checks that on the time zero plane 
\[
\tau(-2\pi\lambda,(u,-u,\theta,\varphi)) > 0 \quad \Leftrightarrow \quad 0 \leq u < |\coth(\pi\lambda)|,
\]
so that $\nu_{-2\pi\lambda}$ acts on such plane as
\[
\nu_{-2\pi\lambda}(u,-u,\theta,\varphi) = \begin{cases}
(f_{-2\pi\lambda}(u), f_{-2\pi\lambda}(-u), \theta, \varphi), &\text{if }0 \leq u < |\coth(\pi\lambda)|,\\
(f_{-2\pi\lambda}(-u), f_{-2\pi\lambda}(u), \pi - \theta, \pi+ \varphi), &\text{if }u > |\coth(\pi\lambda)|,
\end{cases}
\]
with
\[
f_{\lambda}(u) := \frac{(1+u)-e^\lambda(1-u)}{(1+u)+e^\lambda(1-u)}.
\]
If then $\varepsilon$ is such that $\coth(\pi\varepsilon) > R$, and such that for $|\lambda| < \varepsilon$, there holds $ |\coth(\pi\lambda)| > \max\{ f_{2\pi\lambda}(R) , -f_{2\pi\lambda}(-R)\}$ and $f_{2\pi\lambda}(R) < 2R$ (which is possible since $f_{2\pi\lambda}(R) \to R$ as $\lambda \to 0$), one sees, by the monotonicity properties of $f_{-2\pi\lambda} = f_{2\pi\lambda}^{-1}$, that for $|\lambda| < \varepsilon$ 
\begin{align*}
&f_{2\pi\lambda}(R) < u < |\coth(\pi\lambda)| \quad \Rightarrow \quad f_{-2\pi\lambda}(u) > R,\, f_{-2\pi\lambda}(-u) < -R,\\
&u > |\coth(\pi\lambda)| \quad \Rightarrow \quad 
\begin{cases}
f_{-2\pi\lambda}(u) > f_{-2\pi\lambda}(\coth(\pi\lambda)) > R &\text{if }\lambda > 0,\\
f_{-2\pi\lambda}(-u) < f_{-2\pi\lambda}(\coth(\pi\lambda)) < -R &\text{if }\lambda < 0.
\end{cases}
\end{align*}
As a consequence $\supp (\Delta^{i\lambda}\Phi)_0$ is contained in the subset of the time zero plane defined by  $0 \leq u \leq f_{2\pi\lambda}(R)$, which is in turn contained in $B_{2R}$.

Since clearly $\supp (K\Phi)_0 \subset B_R$, it is then sufficient to show that for all $\bdx \in B_{2R}$ and for $|\lambda| < \varepsilon$,
\[ 
\left| (K\Phi)_0(\bdx)-\frac{1}{\lambda}[(\Delta^{i\lambda}\Phi)_0(\bdx) -\Phi_0(\bdx)] \right |\leq C,
\] 
where $C > 0$ is a constant independent from $\lambda$. In order to do this, we first observe that we can choose $\varepsilon > 0$ such that for $|\lambda| < \varepsilon$ there also holds $f_{4\pi\lambda}(R) < 2R$. This implies that if $|\lambda| < \varepsilon$, for all $\bdx \in \supp(\Delta^{i\lambda}\Phi)_0$ one has $\nu_{-2\pi\lambda}(0,\bdx) \in O_{2R}$, the double cone with basis $B_{2R}$. Indeed, if $0 \leq u \leq f_{2\pi\lambda}(R) < |\coth(\pi\lambda)|$, then
\[
f_{-2\pi\lambda}(u) \leq R < 2R, \quad f_{-2\pi\lambda}(-u) \geq f_{-2\pi\lambda}(-f_{2\pi\lambda}(R)) = -f_{4\pi\lambda}(R) > -2R,
\]
and in light cone spherical coordinates $O_{2R}$ is given by $- 2R < v \leq u < 2R$.

Then, from~\eqref{eq:K} and~\eqref{eq:modularPsi}, and using the last observation, we get the estimate
\[\begin{split}
    \bigg| (K\Phi&)_0(\bdx)-\frac{1}{\lambda}\left[(\Delta^{i\lambda}\Phi)_0(\bdx) -\Phi_0(\bdx)\right]\bigg| = \\
    &=\bigg|\pi \bdx \cdot \bds\big[\Phi_0(\bdx)-\Phi(\nu_{-2 \pi \lambda}(0,\bdx))\big]-\frac{1}{\l}\left[\Phi(\nu_{-2 \pi \lambda}(0,\bdx))-\Phi_0(\bdx)\right]+\pi(1-r^2)\partial_0 \Phi(0,\bdx)\\ 
    &\quad+\left[\pi \bdx \cdot \bds-\frac{1}{\lambda}\left(\frac{c_\l +s_\l \bdx \cdot \bds}{c_\l^2-r^2s_\l^2}-1\right)\right]\Phi(\nu_{-2\pi \lambda}(0,\bdx))\bigg| \\
    &\leq  \left[4\pi R \sum_{j=1}^3 \|\sigma_j\| +\bigg\|\pi \bdx \cdot \bds-\frac{1}{\lambda}\left(\frac{c_\l +s_\l \bdx \cdot \bds}{c_\l^2-r^2s_\l^2}-1\right)\bigg\|\right] \sup_{O_{2R}} |\Phi|  \\
    &\quad+ \bigg|\frac{1}{\l}\left[\Phi(\nu_{-2 \pi \lambda}(0,\bdx))-\Phi(0,\bdx)\right]-\pi(1-r^2)\partial_0 \Phi(0,\bdx)\bigg| ,
\end{split}\]
where as usual $r = |\bdx|$. Next, we can further estimate
    \[\begin{split}
         \bigg\|\pi &\bdx \cdot \bds-\frac{1}{\lambda}\left(\frac{c_\l +s_\l \bdx \cdot \bds}{c_\l^2-r^2s_\l^2}-1\right)\bigg\| \\
         &= \left\|\frac{1}{\lambda}(c_\l-1) +\frac{1}{\lambda}(s_\l  -\pi \lambda )\bdx\cdot\bds+\frac{1}{\lambda}(c_\l +s_\l \bdx \cdot \bds)\left(\frac{1}{c_\l^2-r^2s_\l^2}-1\right)\right\|\\
         &\leq \left|\frac{1}{\lambda}(c_\l-1)\right| +\left|\frac{1}{\lambda}(s_\l  -\pi \lambda )\right|\|\bdx\cdot\bds\|+(|c_\l |+|s_\l| \|\bdx \cdot \bds\|)\frac1{|c_\lambda^2-r^2 s^2_\lambda|}\left(\left|\frac1\lambda(c_\lambda^2-1)\right|+r^2 \frac{s_\lambda^2}{\lambda}\right),
    \end{split}\] 
and since  we can choose $\varepsilon$ such that if $|\lambda| < \varepsilon$ one has $\frac{2c_\lambda^2-1}{2s_\lambda^2} > 4R^2$, which entails $c_\l^2-r^2s_\l^2 \geq \frac{1}{2}$ for $r < 2R$, we see that the last line can be bounded by a constant uniformly for $|\lambda| < \varepsilon$ and $r < 2R$.

For the remaining term, by Lagrange's theorem we can find a $\xi_\lambda$ between $(0,\bdx)$ and $\nu_{-2\pi\lambda}(0,\bdx)$ such that, using also~\eqref{eq:nul},
\[\begin{split}
    \bigg|\frac{1}{\l}\big[&\Phi_\alpha(\nu_{-2 \pi \lambda}(0,\bdx))-\Phi_\alpha(0,\bdx)\big]-\pi(1-r^2)\partial_0 \Phi_\alpha(0,\bdx)\bigg|=\\
&= \bigg|(1-r^2)\partial_0\Phi_\alpha(\xi_\l)\left[\frac{s_\l}{\l}\frac{c_\l }{c_\l^2-r^2s_\l^2}-\pi\right]+x_k\partial_k\Phi_\alpha(\xi_\l)\frac1 \lambda \left(\frac{1}{c_\l^2-r^2s_\l^2}-1\right)\\
    &\quad+\pi(1-r^2)\left[\partial_0\Phi_\alpha(\xi_\l)-\partial_0 \Phi_\alpha(0,\bdx)\right]\bigg|\\
    &\leq (4R^2-1)\sup_{O_{2R}} |\partial_0\Phi_\alpha|\left(\left|\frac{s_\l}{\l}\frac{c_\l }{c_\l^2-r^2s_\l^2}-\pi\right|+2\pi\right)+2R \sup_{k,O_{2R}} |\partial_k\Phi_\alpha| \left|\frac1 \lambda \left(\frac{1}{c_\l^2-r^2s_\l^2}-1\right)\right|,
\end{split}\]
and, arguing as above, it is then easy to see that also this term can be estimated by a fixed constant uniformly for $|\lambda| < \varepsilon$ and $r < 2R$, thus concluding the proof.

\section{Computations for the relative entropy of the Majorana field}\label{app:B}

\begin{Lemma}\label{lem:relentropy}
For $\Psi \in \cM$, there holds
\[
\Im \langle \Psi, K_D\Psi\rangle =- \frac1{8\pi^2} \int \frac{d\bdp}{|\bdp|} \big[\hat\Psi_0(\bdp)^\dagger |\bdp|^2(1+\nabla^2) \hat\Psi_0(\bdp) + \hat\Psi_0(\bdp)^\dagger \bdp \cdot \nabla \hat\Psi_0(\bdp)+ i  p_j \hat\Psi_0(\bdp)^\dagger \sigma^{jk} \partial_k \hat\Psi_0(\bdp)\big].
\]
\end{Lemma}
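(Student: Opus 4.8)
The plan is to compute $\langle\Psi,K_D\Psi\rangle$ directly in momentum space from the scalar product of $\bar\cD$ and the explicit formula~\eqref{eq:KDirac} for $K_D$, and then extract its imaginary part. Since $\Psi\in\cM$ has $\Psi_0\in C^\infty_c(\bR^3,\bC^4)$, the Cauchy data $(K_D\Psi)_0$ read off from~\eqref{eq:KDirac} are again smooth and compactly supported, so $K_D\Psi\in\cD\subset\bar\cD$ and the pairing is well defined; moreover $\hat\Psi_0$ and $\widehat{(K_D\Psi)}_0$ are Schwartz functions, which makes all the integrals below absolutely convergent. Fourier transforming~\eqref{eq:KDirac} by means of $\widehat{\partial_k g}=ip_k\hat g$, $\widehat{x_k g}=i\partial_k\hat g$ and $\widehat{r^2 g}=-\nabla^2\hat g$ (from now on $\partial_k$, $\nabla$ denote derivatives in $\bdp$), one gets
\[
\widehat{(K_D\Psi)}_0(\bdp)=-\pi i\,\gamma^0\gamma^k\big[(1+\nabla^2)\big(p_k\hat\Psi_0(\bdp)\big)-\partial_k\hat\Psi_0(\bdp)\big].
\]

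Using $\Im\langle\Psi,K_D\Psi\rangle=\frac1{2i}\big(\langle\Psi,K_D\Psi\rangle-\langle K_D\Psi,\Psi\rangle\big)$ and substituting the scalar product of $\bar\cD$, the terms in which $\slashed{p}_+$ and $\slashed{p}_-$ occur separately recombine, via $\slashed{p}_++\slashed{p}_-=2p_j\gamma^j$, into one; since moreover $\gamma^0\gamma^j$ is hermitian, this collapses to
\[
\Im\langle\Psi,K_D\Psi\rangle=\frac1{(2\pi)^3}\int\frac{d\bdp}{|\bdp|}\,\Im\Big(p_j\,\hat\Psi_0(\bdp)^\dagger\gamma^0\gamma^j\,\widehat{(K_D\Psi)}_0(\bdp)\Big).
\]
Plugging in the formula for $\widehat{(K_D\Psi)}_0$, using the Clifford identity $\gamma^0\gamma^j\gamma^0\gamma^k=\delta_{jk}+i\sigma^{jk}$ and the Leibniz rule $(1+\nabla^2)(p_k\hat\Psi_0)=p_k(1+\nabla^2)\hat\Psi_0+2\partial_k\hat\Psi_0$, and dropping the terms proportional to $p_j\sigma^{jk}p_k=0$, one brings the integrand to the form
\[
p_j\,\hat\Psi_0^\dagger\gamma^0\gamma^j\,\widehat{(K_D\Psi)}_0=-\pi i\big[|\bdp|^2\hat\Psi_0^\dagger(1+\nabla^2)\hat\Psi_0+\hat\Psi_0^\dagger\,\bdp\cdot\nabla\hat\Psi_0\big]+\pi\,p_j\,\hat\Psi_0^\dagger\sigma^{jk}\partial_k\hat\Psi_0.
\]

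It then remains to take the imaginary part of the two contributions. For the first, $\Im(-\pi i\,w)=-\pi\Re w$, and I would show that $\int\frac{d\bdp}{|\bdp|}\big[|\bdp|^2\hat\Psi_0^\dagger(1+\nabla^2)\hat\Psi_0+\hat\Psi_0^\dagger\bdp\cdot\nabla\hat\Psi_0\big]$ is actually real: integrating the $\nabla^2$ term by parts gives $\int|\bdp|\,|\hat\Psi_0|^2-\int|\bdp|^{-1}\hat\Psi_0^\dagger\bdp\cdot\nabla\hat\Psi_0-\int|\bdp|\,|\nabla\hat\Psi_0|^2$, whose middle term cancels the $\bdp\cdot\nabla$ contribution, leaving $\int|\bdp|\big(|\hat\Psi_0|^2-|\nabla\hat\Psi_0|^2\big)\,d\bdp\in\bR$, so $\Re$ can be dropped. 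For the second, $\sigma^{jk}$ being hermitian gives $2\Re(\hat\Psi_0^\dagger\sigma^{jk}\partial_k\hat\Psi_0)=\partial_k(\hat\Psi_0^\dagger\sigma^{jk}\hat\Psi_0)$, whence $\int\frac{p_j}{|\bdp|}\Re(\hat\Psi_0^\dagger\sigma^{jk}\partial_k\hat\Psi_0)\,d\bdp=0$ after one more integration by parts, because $\partial_k(p_j/|\bdp|)\,\sigma^{jk}=0$ by antisymmetry; hence $\int\frac{d\bdp}{|\bdp|}\,\Im(\pi\,p_j\hat\Psi_0^\dagger\sigma^{jk}\partial_k\hat\Psi_0)=-\int\frac{d\bdp}{|\bdp|}\,i\pi\,p_j\hat\Psi_0^\dagger\sigma^{jk}\partial_k\hat\Psi_0$. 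Collecting, with $\pi/(2\pi)^3=1/8\pi^2$, yields the claimed identity.

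The argument is essentially a bookkeeping computation; the two points demanding a little care are the Dirac-matrix algebra — keeping track of which of $\gamma^0\gamma^j$, $\gamma^0\gamma^j\gamma^0\gamma^k$, $\sigma^{jk}$ are hermitian, and of the splitting $\gamma^j\gamma^k=-\delta_{jk}-i\sigma^{jk}$ — and the justification of the integrations by parts, in particular the vanishing of the boundary terms at $\bdp=0$, where the weight $|\bdp|^{-1}$ is singular; these vanish because on $\{|\bdp|=\varepsilon\}$ the relevant integrands are $O(\varepsilon^3)$, or are identically zero by the antisymmetry of $\sigma^{jk}$. (The Majorana reality condition is not in fact used, the statement being needed only for $\Psi\in\cM$.)
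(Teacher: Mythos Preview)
Your argument is correct and follows essentially the same route as the paper: compute $\widehat{(K_D\Psi)}_0$ in momentum space, use $\slashed{p}_++\slashed{p}_-=2p_k\gamma^k$ and $\gamma^0\gamma^j\gamma^0\gamma^k=\delta_{jk}+i\sigma^{jk}$, and finish with integration by parts. The only organizational difference is that the paper keeps both terms $\hat\Psi_0^\dagger\gamma^0\gamma^k\widehat{(K_D\Psi)}_0$ and its conjugate explicitly and integrates the latter by parts to match the former, whereas you collapse to a single $\Im(\cdots)$ from the outset and then use integration by parts to show the relevant integrals are real (resp.\ purely imaginary); the manipulations are equivalent.
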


\begin{proof}
From the identity $\slashed{p}_+ + \slashed{p}_- = 2 p_k \gamma^k$ one gets
\begin{equation}\label{eq:ImPKP}\begin{split}
\Im \langle \Psi, K_D\Psi\rangle &= \frac1{2i} \big( \langle \Psi, K_D\Psi\rangle - \langle K_D\Psi, \Psi\rangle\big) \\
&= \frac1{2i(2\pi)^3} \int_{\bR^3} \frac{d\bdp}{|\bdp|}p_k\big[ \hat\Psi_0(\bdp)^\dagger \gamma^0 \gamma^k \widehat{(K_D\Psi)}_0(\bdp) -   \widehat{(K_D\Psi)}_0(\bdp)^\dagger\gamma^0 \gamma^k \hat\Psi_0(\bdp)\big].
\end{split}\end{equation}
Moreover from Thm.~\ref{thm:modhamDirac} and using $\nabla^2( p_j \hat \Psi_0) = 2 \partial_j\hat\Psi_0 + p_j \nabla^2 \hat\Psi_0$ (where all the derivatives are with respect to the components of $\bdp$),
\[\begin{split}
\widehat{(K_D\Psi)}_0(\bdp) &= -\pi \gamma^0 \gamma^j \left[(1+\nabla^2)(ip_j \hat\Psi_0(\bdp))-i\partial_j \hat\Psi_0(\bdp)\right]\\
&= -i \pi \gamma^0 \gamma^j \left( p_j \nabla^2 \hat\Psi_0(\bdp) + \partial_j \hat\Psi_0(\bdp)+p_j \hat\Psi_0(\bdp)\right).
\end{split}\]
Inserting this formula in~\eqref{eq:ImPKP}, using the identities $(\gamma^0\gamma^j)^\dagger= \gamma^0 \gamma^j$, $\gamma^0 \gamma^j \gamma^0 \gamma^k = - \gamma^j \gamma^k = \delta_{jk} + i \sigma^{jk}$, and taking into account that $\sigma^{jk} = -\sigma^{kj}$, one then obtains
\[\begin{split}
\Im \langle \Psi, K_D\Psi\rangle =-\frac1{16\pi^2}\int_{\bR^3} \frac{d\bdp}{|\bdp|}\big[&|\bdp|^2\hat\Psi_0^\dagger(1+ \nabla^2) \hat\Psi_0 +\hat\Psi_0^\dagger\, \bdp\cdot \nabla \hat\Psi_0 +i \hat\Psi_0^\dagger p_k\sigma^{kj} \partial_j \hat \Psi_0\\
&+|\bdp|^2 (1+\nabla^2)\hat\Psi_0^\dagger\,\hat\Psi_0 + \bdp \cdot \nabla \hat\Psi_0^\dagger\, \hat\Psi_0 + i \partial_j\hat\Psi_0^\dagger \sigma^{jk} p_k \hat\Psi_0 \big].
\end{split}\]
Now, $\hat \Psi_0$ is a Schwarz function, and we can then integrate by parts the terms of the second line and use the elementary relations
\begin{gather*}
\nabla^2( |\bdp| \hat \Psi_0) = \frac 2{|\bdp|}\hat \Psi_0+\frac 2{|\bdp|}\bdp \cdot \nabla\hat\Psi_0+ |\bdp|\nabla^2\hat\Psi_0,\\
\nabla \cdot \left(\frac{\bdp}{|\bdp|}\hat \Psi_0\right) = \frac2{|\bdp|}\hat \Psi_0+\frac1{|\bdp|}\bdp \cdot \nabla \hat \Psi_0,\quad
\partial_j \left(\frac{p_k}{|\bdp|}\hat\Psi_0\right) = \frac{\delta_{jk}}{|\bdp|}\hat\Psi_0- \frac{p_k p_j}{|\bdp|^3}\hat\Psi_0 + \frac{p_k}{|\bdp|}\partial_j \hat\Psi_0,
\end{gather*}
to finally obtain the statement.
\end{proof}

\begin{Lemma}\label{lem:T00}
Given $\Psi \in \cM$,  with $T_{00}$ the quantum energy density of the Majorana field~\eqref{eq:T00} there holds
\[\begin{split}
\langle  \Psi, T_{00}(0,\bdx)\Psi\rangle = \frac1{2(2\pi)^6} \int_{\bR^6} d\bdp\, d\bdq e^{i(\bdq-\bdp)\cdot\bdx} &\bigg\{\big(|\bdp|+|\bdq|\big) \hat \Psi_0(\bdp)^\dagger \hat \Psi_0(\bdq)\\
&\;+ \left( \frac 1{|\bdp|}+\frac1{|\bdq|}\right) \hat\Psi_0(\bdp)^\dagger \big[\bdp\cdot\bdq+i (\bdp \wedge \bdq)\cdot \boldsymbol{\Sigma}\big]\hat\Psi_0(\bdq)\bigg\},
\end{split}\]
where $\Sigma_h := \frac12 \varepsilon_{hjk}\sigma^{jk}$. In particular $\langle \Psi, T_{00}(0, \cdot)\Psi\rangle$ is a smooth compactly supported function, whose support is contained in the one of $\Psi_0$.
\end{Lemma}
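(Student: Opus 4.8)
The plan is to evaluate $\langle\Psi, T_{00}(0,\bdx)\Psi\rangle$ by inserting the momentum-space mode expansion of the Majorana field into the Wick-ordered bilinear~\eqref{eq:T00} and computing the resulting one-particle matrix element with the CARs. Using $\psi = (\phi, \sigma_2\phi^*)^t$ together with the representation $\phi(f) = a(L_f) + a(H_f)^*$ and the explicit amplitudes of~\eqref{eq_72}--\eqref{eq:hflf}, one writes $\psi(x)$ as an integral over $\bdp \in \bR^3$ of $a(\bdp)$- and $a(\bdp)^*$-modes with $4$-spinor amplitudes $w_\pm(\bdp)$ assembled from $\nu_0(\bdp)$, $\sigma_2\overline{\nu_0(\bdp)}$ and the helicity projectors $\pm\undertilde p_\pm/(2|\bdp|)$ (cf.~\eqref{eq:ppm}, \eqref{eq:nu0dyad}, \eqref{eq:sigma2p}), each carrying the usual $(2\pi)^{-3/2}(2|\bdp|)^{-1/2}$ normalization. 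Substituting into~\eqref{eq:T00}, the spatial derivatives $\partial_k$ bring down factors of momentum; after normal ordering, only the term of the form $\int d\bdp\,d\bdq\, e^{i(\bdq-\bdp)\cdot\bdx}\, a(\bdp)^*\, \mathcal K(\bdp,\bdq)\, a(\bdq)$ has nonzero matrix elements between one-particle vectors (the $a(\bdp)^* a(\bdq)^*$ and $a(\bdp)a(\bdq)$ pieces connect them to two- and zero-particle states), and the antisymmetric combination $\tfrac i2(\partial_k\psi^\dagger\cdots - \cdots\partial_k\psi)$ produces the symmetric momentum factor $\tfrac12(p_k + q_k)$ multiplying a spinor bilinear in $\gamma^0\gamma^k$.

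Next I would take the expectation value. Writing $\Psi = a(\zeta)^*\Omega$ for the momentum-space wave function $\zeta$ of $\Psi$, the CARs give $\langle\Psi, a(\bdp)^*\,\mathcal K(\bdp,\bdq)\,a(\bdq)\Psi\rangle = \zeta(\bdp)^\dagger\,\mathcal K(\bdp,\bdq)\,\zeta(\bdq)$, and the identification of $\bar\cM$ with $\cH$ coming from Prop.~\ref{prop:TequivH} (through its right- and left-handed versions together with~\eqref{eq:TequivSM}) lets me express the spinor amplitude applied to $\zeta(\bdp)$ as an explicit scalar multiple of $\hat\Psi_0(\bdp)$ (which already lies in the relevant helicity subspace, so that the projectors act trivially). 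The matrix element then takes the form $\frac1{2(2\pi)^6}\int d\bdp\,d\bdq\, e^{i(\bdq-\bdp)\cdot\bdx}\,\hat\Psi_0(\bdp)^\dagger\, \widetilde{\mathcal K}(\bdp,\bdq)\, \hat\Psi_0(\bdq)$ for an explicit $4\times4$ kernel, and it remains to simplify $\widetilde{\mathcal K}$ by spinor algebra. Using $(\gamma^0\gamma^k)^\dagger = \gamma^0\gamma^k$, the identity $\gamma^0\gamma^j\gamma^0\gamma^k = -\gamma^j\gamma^k = \delta_{jk} + i\sigma^{jk}$, the relations~\eqref{eq:ppm} for $\undertilde p_\pm$, and $\Sigma_h = \tfrac12\varepsilon_{hjk}\sigma^{jk}$, the $\delta_{jk}$ part of the contraction assembles the $(|\bdp|+|\bdq|)\id$ and $\bdp\cdot\bdq$ contributions, while the antisymmetric $i\sigma^{jk}$ part, contracted against $p_j q_k$, produces the term $i(\bdp\wedge\bdq)\cdot\boldsymbol{\Sigma}$; this yields the stated formula.

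I expect the main obstacle to be the bookkeeping rather than any conceptual point: keeping the factor-of-two doubling of~\eqref{eq:psiphi}, the $(2\pi)^{-3/2}$ and $(2|\bdp|)^{-1/2}$ normalizations, and the particle content of the two summands of $\cH$ mutually consistent, and then carrying the $4\times4$ contraction so that the off-diagonal part of $\gamma^0\gamma^j\gamma^0\gamma^k$ correctly reproduces the $\boldsymbol{\Sigma}$ term and the weights $|\bdp|+|\bdq|$ versus $\tfrac1{|\bdp|}+\tfrac1{|\bdq|}$ come out as claimed. As for the final assertions: the Fourier transform $\hat t$ of $t(\bdx) := \langle\Psi, T_{00}(0,\bdx)\Psi\rangle$ is, by the formula just obtained, a convolution of $\hat\Psi_0$ against a polynomially bounded kernel, hence $t$ is smooth; and regrouping the integrand so that each term is the product of a Fourier transform of a compactly supported function (such as $\Psi_0$ or $\partial_k\Psi_0$) with a Fourier transform of a tempered function (such as $\sqrt{-\nabla^2}\,\Psi_0$ or $(-\nabla^2)^{-1/2}\partial_k\Psi_0$) shows that $t$ vanishes outside $\operatorname{supp}\Psi_0$. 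The support property can alternatively be deduced from locality: by the argument in the proof of Thm.~\ref{thm:modularwave} one has $\Psi = B(f)\Omega$ with $f$ supported in a double cone over a neighbourhood of $\operatorname{supp}\Psi_0$, and since $T_{00}(0,\bdx)\Omega = 0$ by normal ordering and $T_{00}(0,\bdx)$ commutes with $\mathscr F(O)$ whenever $(0,\bdx)$ is spacelike to $O$, the expectation vanishes once $\bdx$ lies outside that neighbourhood.
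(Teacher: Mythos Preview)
Your plan is correct and follows essentially the same route as the paper: mode-expand $\psi$ via the $4$-spinors built from $\nu_0$ and $\sigma_2\overline{\nu_0}$, compute the normal-ordered one-particle matrix element of $:\psi^\dagger\gamma^0\gamma^k\psi:$ via the CARs (only the $a^*a$ piece survives, with the two contributions from the $v$-type term and from the normal-ordered $u$-type term combining), and simplify using $\gamma^0\gamma^j\gamma^0\gamma^k=\delta_{jk}+i\varepsilon_{jkh}\Sigma_h$ before applying $\tfrac i2(\partial^x_k-\partial^y_k)|_{\bdy=\bdx}$. The paper's support argument is exactly your convolution/product-of-transforms observation (terms of the form $(G*P_1\Psi_0)^\dagger P_2\Psi_0$); your locality alternative is a nice extra but not needed.
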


\begin{proof}
By the argument in the proof of Prop.~\ref{prop:relentropy}, we can write $\Psi = \sqrt{2}a(l_{\frac12}, l_{-\frac 12})^*\Omega$, with
\[
l_{\frac12}(\bdp) := l_\Phi(\bdp) = - \frac 1{(2\pi)^{3/2}} \nu_0(\bdp)^\dagger \hat{\Phi}_0(\bdp), \; l_{-\frac12}(\bdp) := h_\Phi(\bdp) = \frac1{(2\pi)^{3/2}} \hat{\Phi}_0(-\bdp)^\dagger \nu_0(\bdp), \; \Psi = \left[ \begin{matrix} \Phi \\ \sigma_2 \overline{\Phi}\end{matrix}\right] .
\]
On the other hand, from~\cite[Eq.\ (2.22)]{Hi}, the Majorana field can be expressed in terms of creation and annihilation operators, as a quadratic form, as
\[
\psi(x)= \frac1{(2\pi)^{3/2}} \int_{\bR^3} d\bdp\bigg( e^{ipx} \sum_{s = \pm \frac12} u_s(\bdp)a_s(\bdp)^* + e^{-ipx} \sum_{s = \pm \frac12} v_s(\bdp)a_s(\bdp)\bigg), 
\]
with 4-component spinors $u_s(\bdp), v_s(\bdp)$, $s = \pm \frac12$, defined by
\[
u_{-\frac12}(\bdp) = \left[ \begin{matrix}\nu_0(\bdp) \\ 0\end{matrix}\right], \quad u_{\frac12}(\bdp) = \left[ \begin{matrix}0 \\ -\sigma_2 \overline{\nu_0(\bdp)} \end{matrix}\right], \quad  v_{-\frac12}(\bdp) = \left[ \begin{matrix}0 \\ \sigma_2 \overline{\nu_0(\bdp)} \end{matrix}\right], \quad v_{\frac12}(\bdp) = \left[ \begin{matrix}-\nu_0(\bdp) \\ 0\end{matrix}\right],
\]
and where the creation operators $a_s^*$, $s = \pm\frac12$, are such that $a(l,h)^* = a_{\frac12}(l)^* + a_{-\frac12}(h)^*$, $l, h \in L^2(\bR^3)$. It is then straightforward to compute
\begin{equation}\label{eq:psidpsi}
\langle \Psi, : \psi^\dagger(0,\bdx) \gamma^0\gamma^k \psi(0,\bdy):\Psi\rangle = \frac{2}{(2\pi)^3} 
\big[ v(\bdx)^\dagger \gamma^0 \gamma^k v(\bdy) - u(\bdx)^\dagger \gamma^0 \gamma^k u(\bdy)\big]
\end{equation}
with
\[
u(\bdx) := \int_{\bR^3} d\bdp\, e^{-i \bdp\cdot \bdx} \sum_{s=\pm1/2} \overline{l_s(\bdp)} u_s(\bdp), \quad 
v(\bdx) := \int_{\bR^3} d\bdp\, e^{i \bdp\cdot \bdx} \sum_{s=\pm1/2} l_s(\bdp) v_s(\bdp).
\]
Moreover, by \eqref{eq:sigma2p} and \eqref{eq:nu0dyad}, one obtains
\[\begin{split}
\sum_{s=\pm \frac12} \overline{l_s(\bdp)} u_s(\bdp) &= \frac1{(2\pi)^{3/2}} \left[ \begin{matrix} \nu_0(\bdp)\nu_0(\bdp)^\dagger \hat \Phi_0(-\bdp) \\ \sigma_2 \overline{\nu_0(\bdp)} \nu_0(\bdp)^t \overline{\hat\Phi_0(\bdp)} \end{matrix}\right]
= \frac1{2(2\pi)^{3/2} |\bdp|}\left[\begin{matrix} \undertilde{p}_+ & 0 \\ 0 & \tilde{p}_+\end{matrix}\right] \hat\Psi_0(-\bdp)\\
&=  \frac1{2(2\pi)^{3/2}}\left(1+\frac{p_j}{|\bdp|}\gamma^0 \gamma^j\right) \hat\Psi_0(-\bdp),
\end{split}\] 
and similarly
\[
\sum_{s=\pm \frac12} l_s(\bdp) v_s(\bdp) = - \frac1{2(2\pi)^{3/2}}\left(1+\frac{p_j}{|\bdp|}\gamma^0 \gamma^j\right) \hat\Psi_0(\bdp).
\]
Inserting these results into Eq.~\eqref{eq:psidpsi} and using the identity $\gamma^0 \gamma^j \gamma^0 \gamma^k =\delta_{jk} + i \sigma^{jk} = \delta_{jk} +i \varepsilon_{jkh}\Sigma_h$, yields
\[\begin{split}
\langle \Psi, &: \psi^\dagger(0,\bdx) \gamma^0\gamma^k \psi(0,\bdy):\Psi\rangle =\\
&=\frac{1}{(2\pi)^6} \int_{\bR^6} d\bdp\,d\bdq\,e^{-i(\bdp\cdot\bdx-\bdq\cdot\bdy)} \hat\Psi_0(\bdp)^\dagger\left(\frac{p_j}{|\bdp|} \gamma^0\gamma^j\gamma^0\gamma^k + \frac{q_j}{|\bdq|} \gamma^0\gamma^k\gamma^0\gamma^j\right)\hat\Psi_0(\bdq)\\
&=\frac{1}{(2\pi)^6} \int_{\bR^6} d\bdp\,d\bdq\, e^{-i(\bdp\cdot\bdx-\bdq\cdot\bdy)} \hat\Psi_0(\bdp)^\dagger\left[\frac{p_k}{|\bdp|}+\frac{q_k}{|\bdq|}+i \varepsilon_{jkh} \left( \frac{p_j}{|\bdp|} -\frac{q_j}{|\bdq|}\right) \Sigma_h \right]\hat\Psi_0(\bdq)
\end{split}\]
The formula in the statement is then easily obtained from
\[
\langle \Psi ,T_{00}(0,\bdx)\Psi\rangle = \frac i 2 \left(\frac{\partial}{\partial x_k}-\frac{\partial}{\partial y_k}\right)\langle \Psi, :\psi^\dagger(0,\bdx) \gamma^0\gamma^k \psi(0,\bdy):\Psi\rangle |_{\bdy=\bdx}.
\]
The final part of the statement is a consequence of the fact that, from this formula, $\langle \Psi, T_{00}(0,\bdx) \Psi\rangle$ can be expressed as a sum of terms of the form
\[
(G * P_1 \Psi_0)(\bdx)^\dagger P_2\Psi_0(\bdx) \quad \text{or} \quad P'_1\Psi_0(\bdx)^\dagger (G'*P'_2 \Psi_0)(\bdx),
\] 
with $G, G'$ tempered distributions and $P_j, P_j'$, $j=1,2$, differential operators.
\end{proof}


\end{document}